\def\BibTeX{{\rm B\kern-.05em{\sc i\kern-.025em b}\kern-.08em
    T\kern-.1667em\lower.7ex\hbox{E}\kern-.125emX}}
\newif\iftechreport
\numberwithin{equation}{section}
\newtheorem*{theorem}{Theorem}
\DeclareMathOperator*{\argmax}{\arg \max}
\DeclareMathOperator*{\E}{E}
\DeclareMathOperator*{\Var}{Var}
\newcommand{\Ne}[0]{N^1}
\let\vec\mathbf
\let\mat\mathbf
\newcommand{\sys}{ExSample\xspace}
\newcommand{\todo}[1]{\textcolor{red}{TODO:} \textcolor{gray}{#1}}
\newcommand{\TODO}[1]{\todo{#1}}
\newcommand{\srm}[1]{\textcolor{red}{Sam: #1}}
\newcommand{\favyen}[1]{\textcolor{red}{Favyen: #1}}
\renewcommand{\favyen}[1]{}
\renewcommand{\todo}[1]{}
\renewcommand{\TODO}[1]{}
\renewcommand{\srm}[1]{\xspace}
\newcommand{\hlinelabel}[1]{\linelabel{#1}\Hy@raisedlink{\hypertarget{#1}{}}}
\renewcommand{\hlinelabel}[1]{}
\newcommand{\hlineref}[1]{\hyperlink{#1}{\autopageref{#1}~ln.~\lineref{#1}}}
\newcommand{\lsb}[0]    {{\tt random+}}
\newcommand{\random}[0] {{\tt random}}
\newcommand{\blazeit}[0]{{BlazeIt\xspace}}
\newenvironment{highlightenv}{\color{blue}}{\ignorespacesafterend}
\renewenvironment{highlightenv}{}{\ignorespacesafterend}
\newcommand{\hl}[1]{#1}
\newcommand{\highlight}[1]{\textcolor{blue}{#1}}
\renewcommand{\highlight}[1]{#1}
\definecolor{codegreen}{rgb}{0,0.6,0}
\definecolor{codegray}{rgb}{0.5,0.5,0.5}
\definecolor{codepurple}{HTML}{C42043}
\definecolor{backcolour}{HTML}{FFFFFF}
\definecolor{bookColor}{cmyk}{0,0,0,0.90}  
\lstdefinestyle{mystyle}{
    backgroundcolor=\color{backcolour},   
    commentstyle=\color{codegreen},
    keywordstyle=\color{codepurple},
    numberstyle=\numberstyle,
    stringstyle=\color{codepurple},
    basicstyle=\footnotesize\ttfamily,
    breakatwhitespace=true,
    breaklines=true,
    captionpos=b,
    keepspaces=true,
    deletekeywords={IDENTITY},
    morekeywords={clustered, box, string, polygon},    
    framesep=1pt,
    framextopmargin=1pt,
    framexbottommargin=1pt,
    numbers=none, 
    numbersep=10pt,
    showspaces=false,
    showstringspaces=false,
    showtabs=false,
}
\newcommand\numberstyle[1]{%
    \footnotesize
    \color{codegray}%
    \ttfamily
    \ifnum#1<10 0\fi#1 |%
}
\newcommand{\linebreakand}{%
  \end{@IEEEauthorhalign}
  \hfill\mbox{}\par
  \mbox{}\hfill\begin{@IEEEauthorhalign}
}
\setlist[itemize]{leftmargin=*,nosep}
\setlist[enumerate]{leftmargin=*,nosep}
\newcommand{\reviewcomment}[1]{{\textcolor{red} {\textbf{\underline{#1}}}}}
\begin{document}

\if{0} 
\twocolumn 
\section*{Response to Reviewer Comments}

We thank the reviewers for their constructive comments. We have highlighted our major changes \highlight{in blue}.


\noindent\rule{\columnwidth}{1pt}


\reviewcomment{R3O1} {\em It is not clear how number of chunks in which the video is split is chosen}.  For our benchmarks we split videos into 20 minute chunks. Some videos are shorter than 20 minutes, in which case we use the whole video as a single chunk; for example, BDD consists of 1000 2-minute clips: each of these is used as a chunk. We have changed the description of our datasets in \hlineref{line:datasets} to mention this. Finally, if this question is more broadly about about how the number of chunks should be chosen, we address this question in our response to R402 below.\\
\reviewcomment{R3O2} {\em The generation of the ground truth is approximate and the authors do not use benchmarks to compare the results.} We originally evaluated on 5 datasets: three datasets (archie, amsterdam, and night-street) are existing benchmarks from NoScope~\cite{noscope} that are widely used in existing video analytics work~\cite{focus,blazeit,chameleon}; one dataset (BDD 1k) is an object detection benchmark in the machine learning community; and the last dataset (dashcam) is one we collected ourselves. We used these datasets to make the evaluation as compatible with existing work as possible, though the existing work we looked at does not address the distinct object problem when returning results from video. There is no more widely used benchmark at the moment, which we would have used if there was.  

Regarding the ground truth: the archie, amsterdam, and night-street datasets from prior work use approximate ground truth generated by an off-the-shelf object detector, and we adopt the same approach for BDD and dashcam. However, when evaluating the existing ground truth for those datasets, we found excessive `flickering' of detections over time, which causes an artificially high number of distinct instances, which in reality were simply disconnected segments of a single instance. To reduce this problem, and to substantially improve the accuracy of our ground truth labels over prior work (especially on the three datasets from NoScope), we manually labelled hundreds of frames and fine-tuned detectors for each dataset, and then used these detectors to generate improved ground truth. Labelling the entire datasets, which consist of multiple days of video, would be prohibitively expensive, which is why both prior work and our work opt for approximate ground truth. We argue that by using existing datasets and improving the label quality, we not only evaluated on existing benchmarks, but also went beyond what was done in prior work to minimize any negative effects from approximation. 

Finally, for this revision we have added a new dataset, BDD MOT, where MOT stands for multi-object tracking, a new video dataset released by the BDD creators that hand-labels object instances across frames. The main advantage of BDD MOT is that it has human labelled ground truth. The categories and videos it labels are different from the ones we used for BDD-1k, which is based on approximated ground truth from a model. We decided to keep both.  However, BDD MOT only labels 5 frames per second, consisting of about 1600 clips of short duration (around 30 seconds each, for around 150 labelled frames per clip), which is really not ideal for \sys, but nevertheless results in \autoref{fig:ratioplot} are similar to those BDD-1k.  \\
\reviewcomment{R3O3} {\em In the experimental results there is no plot showing the real gains in response time between \sys and competitors} \\
Thanks for pointing this out.  We now added \autoref{tab:times} to show the times for both proxy scanning (which is constant across queries, as it depends only on dataset size, so we only show it once per dataset) and \sys (we do not include \random in that table, because it would require 6 columns, and putting together \autoref{fig:ratioplot} with \autoref{tab:times}, can give a rough idea of absolute times for \random.). Note that in all cases our method takes less than the proxy scan time, including for reaching 90\% instances. \\
\reviewcomment{R3O4} {\em Future work is missing} We think there are three important areas for future work. One is designing better approaches to integrating NN models into the sampling decisions themselves so that we reap the benefits of more powerful models while avoiding an expensive scanning phase for every query. Another important area would be automating the chunking. A third area for work is standardizing a set of benchmarks, queries and ground truth for video analytics more broadly. As you highlight in your previous comment, we also feel the current state of this area is less than ideal, and building up a good set of benchmarks could help evaluating a lot of future work.  We have now added some of this future work in \autoref{sec:futurework}.
\noindent\rule{\columnwidth}{1pt}

\reviewcomment{R4O1} {\em In their analysis of the variance of the estimator, they made an assumption that objects are independent (top of page 5). Please give some analyses on the case objects are dependent or give more rationalities on this assumption.} This is an important question. The short answer is that we expect the events $X_i,X_j$ representing of two objects $i$ and $j$ appearing in a randomly sampled frame to not be always independent in reality. We found it helpful as a mental model to design the algorithm and reason about its variance, and in in our evaluation we found empirically that the end-to-end method does offer gains on real data. In the next two paragraphs we provide some analysis of the effects of dependencies and some further validation that shows that while real data has some degree dependencies which cause our variance estimate to be somewhat lower than the real variance, but still in the correct ballpark.

 Because our estimate is derived from a sum of random variables $X_i$, The variance of our estimate has the form $\text{Var}[\sum X_i]$. If we assume independence, we know the total variance is the sum of individual variances $\sum \text{E}[X_i^2] - \text{E}^2[X_i]$, which we will call $V_0$. If we do not assume independence of the $X_i$ then the definition of variance expands to  $\text{E}[(\sum X_i)^2] - (\text{E}[\sum X_i])^2$. This sum includes all the terms in $V_0$ above, plus pair-wise interaction terms $\text{Var}[\sum X_i] = V_0 + \sum_{i\neq j} \text{E}[X_iX_j] - \text{E}[X_i] \text{E}[X_j]$. The difference is the sum of over all pairs of instances of  $\text{E}[X_iX_j] - \text{E}[X_i] \text{E}[X_j]$.

	 From this expression we can deduce that if pairs of results tend to co-occur in frames, the variance $V_0$ from assuming independence will be an under-estimate.  If however results tend to not appear simultaneously $V_0$ will be an overestimate. Both effects are likely in real data. We intuitively expect cars to co-occur in frames for a street because of congestion.  At the same time, a picture frame has a finite visual range and can only fit a few objects so co-occurrence at the frame level is bounded in the worst case. 
	 
	According to the previous reasoning, dependencies imply our variance estimates could be off the mark in either direction. To clarify whether this is a significant problem we carried out the following experiment using real data for the objects in the BDD MOT (multi-object tracking) dataset, which as we have mentioned in R302 is a new dataset we have added to the benchmark and is labelled manually by the BDD creators. We took 1000 samples sequentially and using the available ground truth, tracked the {\em real} reward at each point in time $R(n)$, normally invisible to the algorithm,  as well the estimated belief distribution \autoref{eq:gamma} for that run, which you may recall incorporates both our expectation estimate of $R(n)$ as well as the variance estimate \autoref{eq:variance}. We ran each experiment 400 times and checked the fraction of runs for which the actual $R(n)$ lies within the 95\% confidence interval derived from the estimate \autoref{eq:gamma}. Because we can imagine the number of samples affecting our estimators accuracy, we evaluate confidence bounds after $n$=10, 100 and 1000 samples are taken. Results of this experiment appear in the table below. We expect the number on each cell to be more than or equal to .95, since it is a .95 confidence bound. The actual numbers can be somewhat lower, but most lie above .7, suggesting variance estimate being somewhat lower than the true variance, and in turn suggesting there is a higher result co-occurrence within frames (especially for car and pedestrian at .59 and .51 respectively) than we would expect from truly independent events.  regardless, the estimate is still remarkably accurate. This is due to car and pedestrian co-occurring often in the same frame. This inaccuracy is not very harmful because car and pedestrian are also very common categories. We have added a note about this experiment in \hlineref{line:dependence}. 

\begin{table}[!ht]
\center
\small
	\begin{tabular}{lrrr}
\toprule
 &  10   &  100  &  1000 \\
category      &       &       &       \\
\midrule
bicycle       &  0.47 &  0.87 &  0.76 \\
bus           &  0.82 &  0.90 &  0.88 \\
car           &  0.65 &  0.68 &  0.59 \\
motorcycle    &  0.97 &  0.84 &  0.88 \\
pedestrian    &  0.51 &  0.55 &  0.52 \\
rider         &  0.98 &  0.91 &  0.90 \\
trailer       &  0.98 &  0.59 &  0.81 \\
train         &  1.00 &  0.98 &  0.92 \\
truck         &  0.84 &  0.87 &  0.81 \\
\bottomrule
\end{tabular}
\label{tab:vartest}
\caption*{Fraction of runs for which the actual $R(n)$ lies within the 95\% confidence interval  (credible interval) using the mean and variance from \autoref{eq:gamma}. For each run we show this fraction for $n=10$, 100 and 1000. We expect the number on each cell to be .95. See R402. }
\end{table} 
\reviewcomment{R4O2} {\em Even though the authors point out the chunk number can affect the algorithm’s performance and show some relevant experimental results, they still regard that as an input parameter. Adding a function to automatically/adaptively select a good chunk number will make the algorithm more practical and stronger.}.   In \autoref{sec:chunks} and in particular in \autoref{fig:chunks}  we observe that beyond a point increasing the number of chunks can be detrimental, however \autoref{fig:chunks} also shows that 16, 128 and 1024 chunks all work well, and are orders of magnitude apart. Our takeaway is that even though the number of chunks can affect performance, there is a wide range that is acceptable.

As to how we arrive at these chunks for our benchmark in particular, as we mention also in R301, for datasets such as BDD we had little choice due to the nature of the BDD dataset (which is made up of many small clips), so we used one clip per chunk. For the other datasets we used 20 minute chunks. We did not see significant differences in \autoref{fig:ratioplot} whether we used 10, 20, or 40 minute chunks.

We agree an automated way of doing this would be an interesting contribution, and in fact experimented with several strategies for doing this.  The main goal of a chunking strategy would be to minimize the number of chunks while making each chunk highly homogeneous, and making different chunks very different from each other (from the point of view of probabilities). One natural way to try to do this would be to try to cluster chunks. The difficulty here is having metadata (e.g., locations) that could be used for clustering; there is some metadata for BDD but not the rest of the datasets, which makes this approach challenging. A different approach could be to start with coarse chunking and somehow split chunks dynamically as we run and gather statistics about chunks. We tried several variants of this second strategy but couldn't identify a method that worked well in general: for example, splitting chunks in half when skew is detected does not help if the skew occurs at a finer level, which is true for our datasets.  An alternative strategy of searching through a space of unequal splits runs into a problem with noise:  many potential splits will (falsely) look promising due to noise, even if we use error bounds that would be valid for a fixed partitioning. Thus,  because we found that a fixed chunking strategy worked well in practice and couldn't identify a general automated chunking strategy, we decided to leave chunking as an input parameter.
\\
\reviewcomment{R4O3} {\em In the Section V.A, Implementation paragraph, the authors mention proxy cannot reuse the scores for different object queries. Does this mean the proposed algorithm can do that? If yes, please give more explanations here.} The proposed algorithm does not reuse scores either, because it does not use that type of per-frame score in the first place, instead we score chunks of data. We changed the wording of \hlineref{line:score} to avoid suggesting this \\
\reviewcomment{R4 additional remarks} Thank you for taking time to enumerate these typos, we have corrected them.

\noindent\rule{\columnwidth}{1pt}
\reviewcomment{R5O1} {\em Experimental studies are weak.
Despite a large number of existing works, only naive sampling is used as baselines in experiments. It is thus unclear how this paper advances the techniques in the field in terms of effectiveness and efficiency. } In addition to uniform random sampling baselines, we evaluated \sys against BlazeIt, which is a state-of-the-art representative of proxy-based methods that have been shown to perform well on related problems~\cite{noscope,focus,blazeit}. These proxy-based methods use an importance sampling approach based on a score computed using NNs for each frame, and this frame scan dominates their cost, as we show now more clearly in \autoref{tab:times}. \sys avoids scanning the dataset, and this the main reason why it works better than sophisticated scoring. \\
\reviewcomment{R5O2} {\em The justification of excluding NN based solutions is not convincing. In the paper it is claimed that the NN based solutions are inferior since the models need to be trained for specific object types. However, in real-life applications, such as the camera datasets used in this paper, the number of data types is quite limited such as vehicles, pedestrians, traffic lights, etc and it is feasible to pre-train the model for different data sets. These solutions cannot be considered as completely orthogonal to this paper.} We did not intend in our paper to imply NN based solutions are inferior, and we do not exclude them in our evaluation. The current state of the art alternatives based on training a lighter weight NN for scoring frames before-hand suffer from a large fixed cost of scanning as a prerequisite to extracting a first sample, which we demonstrate through extensive experiments. The weakness in that approach is not meant to suggest NNs can only be used as black boxes, just that the particular way of using them embodied in current work has a limitation due to scanning which is absent in \sys. We have changed the wording in the introduction \hlineref{line:NNbased} to make this clearer.
We also should point out \sys in fact uses an NN based detector once our sampling algorithms have identified frames likely to contain an object of interest.   Our contribution is the sampling strategy and the way we integrate the expensive NN into this process: the NN itself is part of \sys, but we don't view using NNs for object detection as our contribution. 

We think a good area for future work is designing better approaches to integrating NN models into the sampling decisions themselves so that we get the benefits of these models but avoid an expensive scanning phase for every query. 
We added some ideas on how \sys could use better predictive scoring in \autoref{sec:futurework}.
\\

\fi 

\title{\sys: Efficient Searches on Video Repositories through Adaptive Sampling 
\thanks{Research was sponsored by the United States Air Force Research Laboratory and the United States Air Force Artificial Intelligence Accelerator and was accomplished under Cooperative Agreement Number FA8750-19-2-1000. The views and conclusions contained in this document are those of the authors and should not be interpreted as representing the official policies, either expressed or implied, of the United States Air Force or the U.S. Government. The U.S. Government is authorized to reproduce and distribute reprints for Government purposes notwithstanding any copyright notation herein.
}
}
\if{0}
\numberofauthors{6}

\author{
\alignauthor
Oscar Moll\\
       \affaddr{MIT CSAIL}\\
       \email{orm@csail.mit.edu}
\alignauthor
Favyen Bastani\\
       \affaddr{MIT CSAIL}\\
       \email{favyen@csail.mit.edu}
\alignauthor
Sam Madden\\
       \affaddr{MIT CSAIL}\\
        \email{madden@csail.mit.edu}
\and     
\alignauthor
Mike Stonebraker\\
		\affaddr{MIT CSAIL}\\
        \email{stonebraker@csail.mit.edu}
\alignauthor
Vijay Gadepally\\
 \affaddr{MIT Lincoln Laboratory}\\
 \email{vijayg@ll.mit.edu}
\alignauthor
Tim Kraska\\
	\affaddr{MIT CSAIL}\\
    \email{kraska@mit.edu}
}
\date{30 August 2021}
\fi

\author{\IEEEauthorblockN{Oscar Moll}
\IEEEauthorblockA{\textit{MIT CSAIL}\\
Cambridge, MA, USA \\
orm@csail.mit.edu}
\and
\IEEEauthorblockN{Favyen Bastani}
\IEEEauthorblockA{\textit{MIT CSAIL}\\
Cambridge, MA, USA \\
favyen@csail.mit.edu}
\and
\IEEEauthorblockN{Sam Madden}
\IEEEauthorblockA{\textit{MIT CSAIL}\\
Cambridge, MA, USA \\
madden@csail.mit.edu}
\linebreakand 
\IEEEauthorblockN{Mike Stonebraker}
\IEEEauthorblockA{\textit{MIT CSAIL}\\
Cambridge, MA, USA \\
stonebraker@csail.mit.edu}
\and
\IEEEauthorblockN{Vijay Gadepally}
\IEEEauthorblockA{\textit{MIT Lincoln Laboratory}\\
Lexington, MA, USA \\
vijayg@ll.mit.edu}
\and
\IEEEauthorblockN{Tim Kraska}
\IEEEauthorblockA{\textit{MIT CSAIL}\\
Cambridge, MA, USA \\
kraska@mit.edu}
}

\maketitle
\setcounter{page}{1}
\thispagestyle{plain}
\pagestyle{plain}


\begin{abstract}
Capturing and processing video is increasingly common as cameras become cheaper to deploy.  At the same time, rich video-understanding methods have progressed greatly in the last decade. As a result, many organizations now have massive repositories of video data, with applications in mapping, navigation, autonomous driving, and other areas.

Because state-of-the-art object-detection methods are slow and expensive, our ability to process even simple ad-hoc object search queries (``find 100 traffic lights in dashcam video'') over this accumulated data lags far behind our ability to collect the data. Processing video at reduced sampling rates is a reasonable default strategy for these types of queries; however, the ideal sampling rate is both data and query dependent. We introduce \sys, a low cost framework for object search over un-indexed video that quickly processes search queries by adapting the amount and location of sampled frames to the particular data and query being processed.

\sys\ prioritizes the processing of frames in a video repository so that processing is focused in portions of video that most likely contain objects of interest. It approaches searching in a similar way to a multi-arm bandit problem where each arm corresponds to a portion of a video.  On large, real-world datasets, \sys\ reduces processing time by 1.9x on average and up to 6x over an efficient random sampling baseline. Moreover, we show \sys finds many  results long before sophisticated, state-of-the-art baselines based on proxy scores can begin producing their first results.

\end{abstract}

\begin{IEEEkeywords}
video data,sampling,object detection
\end{IEEEkeywords}

\section{Introduction}

Video cameras have become incredibly affordable over the last decade, and are ubiquitously deployed in static and mobile settings, such as smartphones, vehicles, surveillance cameras, and drones. Large video datasets are enabling a new generation of applications. For example, video data from vehicle dashboard-mounted cameras (dashcams) is used to train object detection and tracking models for autonomous driving systems~\cite{bdd}; to annotate map datasets such as OpenStreetMap with locations of traffic lights, stop signs, and other infrastructure~\cite{osmblog}; and to automate insurance claims processing by analyzing collision scene footage~\cite{nexar}.  

However, these applications must process large amounts of video to extract useful information. Consider the  task of finding examples of traffic lights---for example, to annotate a map---within a large collection of dashcam video collected from many vehicles. The most basic approach to evaluate this query is to run an object detector frame-by-frame over the dataset, and select frames where it detects one or more lights. Because state-of-the-art object detectors run at about 10 frames per second (fps) over 1080p video on modern high-end Graphics Processing Units (GPUs), scanning through a collection of 1000 hours of 30 fps video with a detector on a GPU would take 3000 GPU-hours. In the offline query case, which is the case we focus on in this paper, we can parallelize our scan over the video across many GPUs; however, as the rental price of a GPU is around \$0.50 per hour (for the cheapest $g4$ Amazon Web Services [AWS] instance in 2021) \cite{aws}, our cost for this one ad-hoc query would be \$1.5K, regardless of parallelism. Hence, this workload presents challenges in both time and monetary cost.  Note that accumulating 1000 hours of video represents just 10 cameras recording for less than a week.

A straightforward method for mitigating this issue is to reduce the number of sampled frames: for example,  run the detector only on one frame per second of video; since it is reasonable to assume all traffic lights are visible for more than one second. The savings are large compared to inspecting every frame: processing only one frame every second decreases costs by 30x for a video recorded at 30 fps.  
Unfortunately, this strategy has limitations.  For example, the one frame out of every 30 that we look at may not show the light clearly, causing the detector to miss it completely; lights that remain visible in the video for long intervals, e.g., 30 seconds, would be seen multiple times unnecessarily, thereby wasting resources detecting those objects repeatedly; or worse still, we may miss other types of objects that remain visible for shorter intervals, and in general the optimal sampling rate is unknown and varies across datasets depending on factors such as whether the camera is moving or static and the angle and distance to the object. 


In this paper, we introduce \sys, a video sampling technique designed to reduce the number of frames that need to be processed by an expensive object detector for object-search queries over large video repositories. \sys\ models this problem as one of deciding which frame from the dataset to look at next based on what it has seen in the past. Specifically,  \sys\ splits the dataset into temporal chunks (e.g., half-hour chunks), and maintains a per-chunk estimate of the probability of finding a new object in a frame randomly sampled from that chunk.
ExSample iteratively selects the chunk with the best estimate, samples a frame randomly from the chunk, and processes the frame through the object detector. ExSample updates the per-chunk estimates after each iteration so that the estimates become more accurate as more frames are processed.

\begin{highlightenv}
Recent state-of-the-art methods for optimizing queries over large video repositories have focused on training cheap, proxy models to approximate the outputs of expensive, deep neural networks~\cite{miris,focus,noscope,videomon,msr,blazeit}. However, these methods require training a proxy model for each new query and then running it on the dataset to choose which frames to look at with the more expensive reference detector. While helpful for some queries, for ad-hoc object queries these approaches impose a  scanning overhead required to score the frames, which can be hard to compensate for, even if the scores help find results faster after scanning.  This is especially problematic if the user only wants to find a few results (limit queries) \hlinelabel{line:NNbased}.
\end{highlightenv}

\sys addresses the limit query problem from a different perspective. Rather than select frames for applying the object detector based on proxy scores, \sys employs an adaptive sampling technique that eliminates the need for a proxy model and an upfront full scan.
A key challenge here is that, in order to generalize across datasets and object types, \sys must not make assumptions about how long objects remain visible to the camera and how frequently they appear. To address this challenge, \sys guides future processing using feedback from object detector outputs on previously processed frames.
Also, \sys gives higher weight to portions of video likely to contain objects that have not  already been seen. Thus, \sys avoids redundantly processing frames that contain objects that were previously seen in other frames.

We evaluate \sys\  a variety of search queries spanning different objects, different kinds of video, and different numbers of desired results.  We show savings in the number of frames processed ranging up to 6x on real data, with a geometric average of 1.9x across all settings, in comparison to an efficient random sampling baseline. In the worst case, \sys\ does not perform worse than \random\ sampling, something that is not always true of alternative approaches.

In summary, our contributions are (1) \sys, an adaptive sampling method that facilitates ad-hoc object searches over video repositories, (2) a formal analysis of \sys's design, and (3) an empirical evaluation showing \sys is effective on real datasets  under real system constraints, and outperforms existing approaches for object search.

\section{Background}

In this section we review object detection, introduce distinct object queries, and discuss limitations of prior work.

\subsection{Object Detection}

An object detector is a model that operates on still images, inputting an image and outputting a set of boxes within the image containing the objects of
interest. The amount of objects found will range from zero to arbitrarily many. Well known examples of object detectors include Yolo~\cite{yolov3} and Mask R-CNN~\cite{maskrcnn}.
In \autoref{fig:instances}, we show two example frames that were processed by an object detector, with yellow boxes indicating detection of traffic lights.

Object detectors with state-of-the-art accuracy in benchmarks such as COCO~\cite{coco} typically execute at around 10 frames per second on modern GPUs, though it is possible to achieve real-time rates by sacrificing accuracy~\cite{tfmodels,yolov3}.

In this paper we do not seek to improve on state-of-the-art object-detection approaches. Instead, we regard object detectors as a black box with a costly runtime; and aim to substantially reduce the number of video frames that need to be processed by the detector.

\subsection{Distinct Object Queries} \label{sec:objquery}

Also, we are interested in processing higher level queries on video enabled by the availability of object detectors. In particular, we are concerned with distinct object limit queries such as ``find 20 traffic lights in my dataset'' over large repositories of multiple videos from multiple cameras. For distinct object queries, each result should be a detection of a different object. For example, in \autoref{fig:instances}, we detected the same traffic light in two frames several seconds apart; although these detections are at different positions and are computed in different frames, in a distinct object query, these two detections yield only one distinct result.


\begin{figure}[htb]
\includegraphics[width=\linewidth]{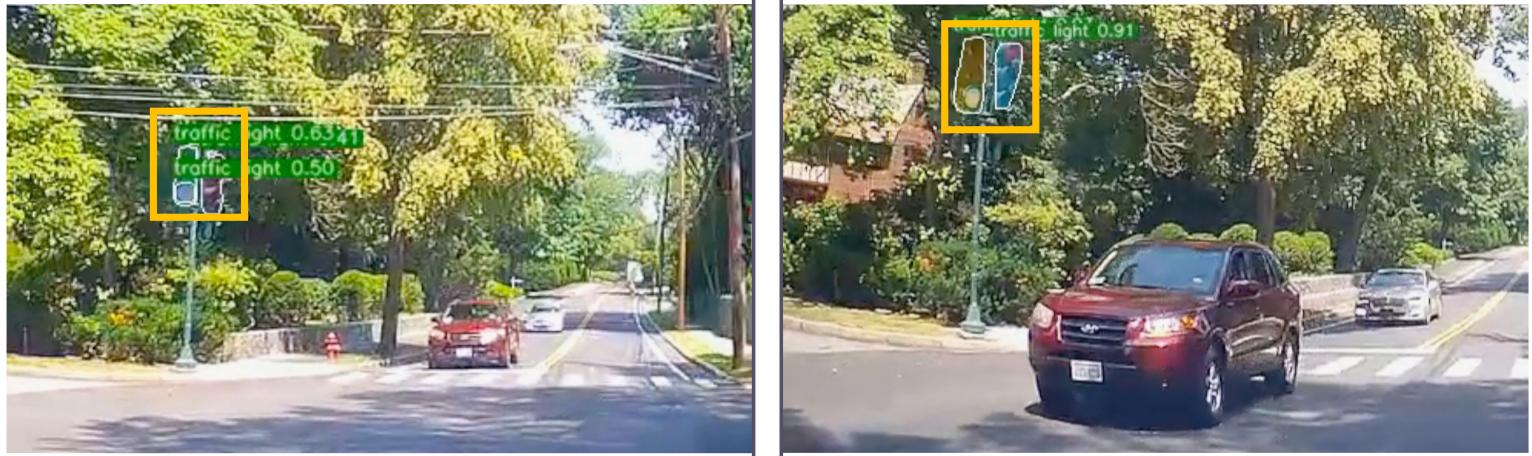}
\caption{Two video frames showing the same
  traffic light instance several seconds apart. A distinct object query is defined by having these two boxes only count as one result.}
\vspace{-.1in}
\label{fig:instances}
\end{figure}

Then, to define a query, users must specify not only the object type of interest and the number of desired results, but also a \emph{discriminator function} that determines whether a new detection corresponds to an object that was already seen earlier during processing. In this paper, we assume a fixed discriminator that applies an object-tracking algorithm to determine whether a detection is new. The discriminator applies a tracker similar to SORT~\cite{SORT} backwards and forwards through video for each detection of a new object to compute the position of that object in each frame where the object was visible; then, future detections are discarded if they match previously observed positions.


The goal of this paper is to reduce the cost of processing such queries. Moreover, we want to do this on ad-hoc distinct object queries over large video repositories, where there are diverse videos in our dataset and where it is too costly to compute all object detections ahead of time for the type of objects we are looking for. This distinction affects multiple decisions in the paper, including not only the results we return but also the main design of \sys and how we measure result recall.

Following, we discuss two baselines and prior work for optimizing the processing of distinct object queries.

\smallskip
\noindent
\textbf{Naive execution.} A straightforward method is to process frames sequentially, applying the object detector on each frame of each video, using the discriminator function to discard repeated detections of the same object. Once enough distinct objects to satisfy the query's limit clause are collected, scanning can stop. A natural extension is to sample only one out of every $n$ frames.
Sequential processing exhibits high variance in execution time due to the uneven distribution of objects in video. Moreover, if objects appear in the video for much longer than the sampling rate, we may repeatedly compute detections of the same object. Similarly, if objects appear for shorter than the sampling rate, we may completely miss some objects.

\smallskip
\noindent
\textbf{Random sampling.} A better strategy is to iteratively process frames uniformly sampled from the video repository (without replacement). This method reduces the query execution time over naive execution as it explores more areas of the data more quickly, whereas sequential execution can get stuck in a long segment of video with no objects. Additionally, early in query execution, randomly sampled frames are less likely to contain previously seen objects compared to frames sampled sequentially.

\smallskip
\noindent
\textbf{Proxy-based methods.} Methods that optimize video query execution by training cheap, proxy models to approximate the outputs of expensive object detectors have recently attracted much interest. In particular, BlazeIt~\cite{blazeit} proposes an adaption of proxy model techniques for processing distinct object queries with limit clauses. Rather than randomly sampling frames for processing through the object detector, BlazeIt processes frames in order of the score computed by the proxy model on the frame, beginning with the highest scoring frames. Since the proxy model is trained to produce higher scores on frames containing relevant object detections, this approach effectively ensures that frames processed early during execution contain relevant detections (but not necessarily new objects).

However, proxy-based methods have several critical shortcomings when used for processing object search queries. First, for queries seeking rare objects that appear infrequently in video, these methods require substantial pre-processing to collect annotations for training the proxy models, which can be as costly as solving the search problem in the first place; indeed, BlazeIt resorts to random sampling if the number of positive training labels is too small. Moreover, when processing limit queries, proxy-based methods require an upfront per-query dataset scan in order to compute proxy scores on every video frame in the dataset. As we will show in our evaluation, oftentimes, the cost of performing just this scan is already larger than simply processing a limit query under random sampling.

\section{\sys}
\label{sec:sys}
In this section we explain how our approach, \sys, optimizes query execution.  Due to the high compute demand of object detectors, runtime in ExSample is roughly proportional to the number of frames processed by the detector. Thus, we focus on minimizing the number of frames processed to find some number of distinct objects.

To do so, \sys estimates which portions of a video are more likely to yield new results, and samples frames more often from those portions. Importantly, \sys prioritizes finding distinct objects, rather than purely maximizing the number of computed object detections.

At a high level, ExSample conceptually splits the input into chunks, and scores each chunk separately based on the frequency with which distinct objects were found in that chunk in the past. 
This scoring system allows \sys to allocate resources to more promising chunks, while also allowing it to diversify where it looks next as more frames are sampled.  Our evaluation shows that this technique helps \sys outperform greedy, proxy-guided strategies, even when they employ duplicate avoidance heuristics (i.e., do not process frames that are close to previously processed frames).

To make it practical, \sys is composed of two core components: an estimate of future results per chunk, described in Section \ref{sec:derivation}, and a mechanism to translate these estimates into a sampling decision which accounts for estimate errors, introduced in Section \ref{sec:thompson}. In these two sections we focus on quantifying the types of error. Later, in Section \ref{sec:algorithm}, we detail the algorithm step by step.
	
\subsection{Scoring a Single Chunk}
\label{sec:derivation}
In this section we derive our estimate for the future value of sampling a chunk. 
To make an optimal decision of which chunk to sample next, assuming $n$ samples were taken, \sys estimates $R(n+1)$ for each chunk, which represents the number of {\em new} results we expect to find on the next sample. {\em New} means $R$ does not include objects already found in the previous $n$ samples, even if they also appeared in the  $(n+1)^{\text{th}}$ frame. Intuitively, the chunk with the largest $R(n+1)$ is a good location to pick a frame from next. 

Our main conceptual tool for reasoning about the quantity $R(n+1)$ as well as our estimates for it throughout this paper is to analyze each result instance in isolation and aggregate to obtain global estimates. More formally, let $N$ be the number of distinct objects in the data.  Each object $i$ out of those $N$ is visible for a different number of frames. When sampling frames at random from a chunk, each $i$ will have a different probability $p_i$ of being found proportional to its duration. For example, in video collected from a vehicle stopped at a red light, red lights will tend to have large $p_i$ lasting in the order of minutes, while green and yellow lights are likely to have much smaller $p_i$, perhaps in the order of a few seconds. We find in practice these quantities $p_i$ vary widely even for a single object class, from tens to thousands of frames. For any given run of samples, $R(n+1)$ is the sum over the $p_i$ of all as-yet unseen instances $R(n+1)=\Sigma_{i=1}^N p_i\cdot[i \notin \text{seen}(n)]$

We emphasize that $p_i$ and $N$ are used in order to reason about \sys, but they are {\em not known in advance} by \sys or by the user. Instead, a key contribution of \sys is to adapt to a dataset and a query specific set of $p_i$ during the sampling process. Furthermore, we clarify \sys also does not attempt to estimate individual $p_i$ or $N$, instead \sys estimates $R$ directly with the following estimate $\hat{R}$:
\begin{equation}
\hat{R}(n+1) \mathrel{\mathop:}= \Ne(n)/n  \label{eq:estimate0}
\end{equation}
where $\Ne(n)$ is the number of distinct {\em results (objects)} seen exactly once so far, and $n$ is the number of {\em frames} sampled and inspected so far.

To implement this estimate \sys tracks how many times we have seen each distinct result. Each result seen only once contributes a count of one to $\Ne(n)$. Results seen more than once do not contribute anything to $\Ne(n)$, and neither do any unseen results (which would be impossible to account for directly). Unlike the $p_i$ or $N$, which are unknown to the user and to \sys, $\Ne(n)$ is a quantity \sys can observe.  

We now justify the definition of $\hat{R}$ in \autoref{eq:estimate0} by bounding the expected error $\E[\hat{R} - R]$ in terms of data dependent quantities (we let $\mu$ and $\sigma$ stand for mean and standard deviation)

\begin{theorem}[Bias of $\hat{R}$]
\begin{equation} \label{eq:bias}
0 \leq \frac{\E\left[\hat{R} - R\right]}{\hat{R}} \leq 
\begin{cases}
	\max{p_i}\\
	\sqrt{N}(\mu_p + \sigma_p)
\end{cases}
\end{equation}
\end{theorem}

Intuitively, \autoref{eq:bias} states $\hat{R}$ overestimates $R$ in expectation, that the size of the overestimate is guaranteed to be less than the largest probability, which is likely small, but even if some $p_i$ outliers made $\max p_i$ large,  if the standard deviation $\sigma_p$ is small we can still bound the bias.  \autoref{eq:bias} does suggest (but does not imply) that skew  (measured by $\sigma_p$) could affect the accuracy of the estimate, we will address that experimentally. A large $N$ or a large $\mu_p$ may seem problematic for \autoref{eq:bias}, but we note that having large number of results $N$ or long average duration $\mu_p$ implies many results are found after only a few samples, so the end goal of searching through the data is easier in the first place and guaranteeing accurate estimates is less important.

\begin{proof} 
The main proof idea is accounting for each object $i$'s individual expected contribution to both $R(n+1)$ and to $\Ne(n)$. We then recover both $\Ne(n)$ and $R(n+1)$ by adding each term, justified by linearity of expectation.
The individual inequalities of \autoref{eq:bias} follow from analyzing this sum. The event that on a given sequence of samples object $i$ is seen on the $(n+1)^{\text{th}}$ sample after being missed the on the first $n$ samples occurs with probability $p_i(1 - p_i)^{n}$. We denote this quantity $\pi_i(n+1)$.  On the other hand, the event that after $n$ samples object $i$ has appeared, exactly one occurs with probability $np_i(1-p_i)^{n-1} = n\pi_i(n)$.  Therefore, $E\left[R(n+1)\right] = \sum \pi(n+1)$, while $E\left[\Ne(n)\right] = n \sum \pi(n)$. These sums are taken over the N object indices $i$, which we omit for convenience. Therefore, $E\left[\Ne(n)/n - R(n+1)\right] = \sum \pi(n) - \pi(n+1)$. Because by definition $\pi(n+1) = (1-p)\pi(n)$, the right-hand side simplifies to $\sum p\pi_(n)$. Intuitively, term by term this error is small compared to the terms in our estimate: $\sum \pi (n)$, especially when the $p_i$ are small. The expected error is positive, showing the left side of \autoref{eq:bias}. The top right inequality is easily seen by replacing the individual $p$ with $\max {p}$ and factoring it out. 

The remaining inequality  can be derived by applying Cauchy-Schwartz to the sum.
\iftechreport
\begin{align*} 
	\Sigma p_i\pi_i(n) &\leq \sqrt{(\Sigma p_i^2)(\Sigma \pi_i^2)} & \text{(Cauchy-Schwarz)} \nonumber\\
	&\leq \sqrt{(\Sigma p_i^2) \left(\Sigma \pi_i\right)^2} & \pi_i \text{ are positive } \nonumber\\
	&= \sqrt{\left(\Sigma p_i^2\right)} \Sigma \pi_i\nonumber\\
	&= \sqrt{\left(\Sigma p_i^2\right)} \E[\Ne]/n \nonumber\\
	&= \sqrt{N} \sqrt{\left(\Sigma p_i^2/N\right)} \E[\Ne]/n \nonumber\\
	\intertext{The term $\Sigma p_i^2/N$ is the second moment of the $p_i$, and can be rewritten as $\sigma^2_p + \mu^2_p$, where $\mu_p$ and $\sigma_p$ are the mean and standard deviation of the underlying result durations:}
		&= \sqrt{N}\sqrt{(\sigma^2_p + \mu^2_p)}\E[\Ne]/n \nonumber\\
			&\leq \sqrt{N}(\sigma_p + \mu_p)\E[\Ne]/n  
			\label{eq:proof_sigp}
\end{align*}
\else
The full details appear in an extended version \cite{techreport}. For this paper, the main importance of this last inequality involving $\sigma_p$ is to consider the relevance of skew over the $p_i$ in practice.
\fi
\end{proof}


\subsection{Picking the best chunk under uncertainty} \label{sec:thompson}
In this section we explain how we extend the earlier estimate with an estimate of its uncertainty so we can use it to make decisions about which chunk to use even when there is some error in our estimates.

If we knew  the real $R_j$ for every chunk $j$, then the optimal algorithm would simply sample from the chunk with the largest $R_j$ value. However, if we  use the raw point estimate $\hat{R}_j$ in place of $R_j$ \sys could get stuck sampling chunks with an early lucky result and ignore better chunks with unlucky early results. \footnote{Beside the uncertainty problem, when sampling from multiple chunks there is also a second, less important concern of how to handle instances spanning multiple chunks.  In~\cite{techreport} we show \autoref{eq:estimate0} can be adjusted to handle this problem.}

Now we explain how \sys  handles the problem that an observed $\Ne(n)$ fluctuates randomly due to randomness in our sampling. This is especially true early in the sampling process, where only a few samples have been collected but we need to make a sampling decision. Because the quality of the estimates themselves is tied to the number of samples taken, and we do not want to stop sampling a chunk due to a small amount of bad luck early on, it is important we estimate how noisy our estimate is. The usual way to do this is by estimating the variance of our estimator: $\Var[\Ne(n)/n]$, which we find is small compared to $\hat{R}$. 

\begin{theorem}[Variance]
\begin{equation} \label{eq:variance}
 \Var[\hat{R}(n+1)] \leq \E[\hat{R}(n+1)]/n
\end{equation}
\end{theorem}

\begin{proof}
	\hlinelabel{line:independence} Similar to our estimate of bias, we estimate the variance $\Ne(n)$ assuming independence of the different instances and adding their individual variances.
	 We can express $\Ne(n)$ a sum of binary indicator variables $X_i$, which are 1 if instance $i$ has shown up exactly once. $X_i = 1$ with probability $n\pi_i(n)$. Therefore, $\Ne(n) = \sum_i X_i$ and because of our independence assumption $\text{Var}[\Ne(n)] = \sum_i \text{Var}[X_i]$. Because $X_i$ is a Bernoulli random variable, its variance is $n\pi_i(n)(1- n\pi_i(n))$ which is bounded by $n\pi_i(n)$ itself.  Therefore, $\text{Var}[\Ne(n)] \leq \sum n \pi (n)$. This latter sum we know is $\text{E}[\Ne(n)]$. Therefore, $\Var[\Ne(n)/n] \leq \text{E}[\Ne(n)]/n^2 = \E[\hat{R}(n+1)]/n $.
\end{proof}

In fact, we can go further and fully characterize the distribution of values $\Ne(n)$ takes as follows.

\begin{theorem}[Sampling distribution of $\Ne(n)$]
\label{th:poisson}
Assuming $p_i$ are small or $n$ is large, and assuming independent occurrence of instances,  $\Ne(n)$ follows a Poisson distribution with parameter $\lambda = \Sigma \pi_i(n)$.
\end{theorem}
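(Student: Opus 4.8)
The plan is to reuse the representation already established in the variance proof: $\Ne(n) = \sum_{i=1}^{N} X_i$, where the $X_i$ are independent Bernoulli indicators with $\Pr[X_i = 1] = \pi_i(n) = np_i(1-p_i)^{n-1}$. Thus $\Ne(n)$ is exactly a Poisson--binomial random variable, and the statement is precisely the classical Poisson limit (Le Cam) theorem for sums of independent, non-identical Bernoulli trials. So the task reduces to (a) quoting or re-deriving the Poisson approximation for such sums, and (b) checking that the hypothesis ``$p_i$ small or $n$ large'' is exactly the condition that makes the approximation tight, with the natural parameter being $\lambda = \sum_i \pi_i(n) = \E[\Ne(n)]$.

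For step (a) I would give a short self-contained argument via probability generating functions, matching the elementary style of the earlier proofs. Since $\E[z^{X_i}] = 1 + \pi_i(n)(z-1)$ and the $X_i$ are independent, the pgf of $\Ne(n)$ is $G(z) = \prod_{i=1}^{N}\bigl(1 + \pi_i(n)(z-1)\bigr)$, so $\log G(z) = \sum_i \log\bigl(1+\pi_i(n)(z-1)\bigr) = \lambda(z-1) + O\!\bigl(\sum_i \pi_i(n)^2\bigr)$ uniformly on any bounded $z$-region, using $\log(1+x)=x+O(x^2)$. Hence $G(z) \to e^{\lambda(z-1)}$, the pgf of $\mathrm{Poisson}(\lambda)$, whenever $\sum_i \pi_i(n)^2 \to 0$; the continuity theorem for pgfs then gives convergence in distribution (with $\lambda$ bounded). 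Equivalently one may just cite Le Cam's inequality $d_{\mathrm{TV}}\!\bigl(\Ne(n),\mathrm{Poisson}(\lambda)\bigr) \le \sum_i \pi_i(n)^2$. Either way, since $\sum_i \pi_i(n)^2 \le \bigl(\max_i \pi_i(n)\bigr)\sum_i \pi_i(n) = \lambda\,\max_i \pi_i(n)$, everything comes down to bounding $\max_i \pi_i(n)$.

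For step (b), I would bound $\pi_i(n) = np_i(1-p_i)^{n-1}$ in the two stated regimes: if $\max_i p_i$ is small (with $n$ fixed), then $\pi_i(n) \le np_i \le n\max_j p_j$ is small; if $n$ is large (with $N$ fixed), then $\pi_i(n) \le np_i e^{-(n-1)p_i} \to 0$ for each fixed $p_i>0$ (and $\pi_i(n)=0$ when $p_i=0$), so $\max_i \pi_i(n)\to 0$. In both cases $\sum_i\pi_i(n)^2\to 0$, completing the argument. The main obstacle, and the point I would be careful about, is that $\pi_i(n)$ is \emph{not} monotone in $p_i$ or $n$: as a function of $p_i$ it is maximized near $p_i = 1/n$, where $\pi_i(n)\approx 1/e$, which is \emph{not} small. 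Consequently the informal hypothesis ``$p_i$ small or $n$ large'' has to be read as one of the two clean limiting regimes above (and in the second one needs $N$ fixed, or more generally $\sum_i \pi_i(n)^2 \to 0$), so a fully rigorous version of the theorem should name that regime explicitly rather than leave it implicit.
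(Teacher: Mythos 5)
Your proposal is correct and follows essentially the same route as the paper's proof: both decompose $\Ne(n)$ into a sum of independent Bernoulli indicators with success probabilities $\pi_i(n)$ and show that the product of their generating functions (you use pgfs, the paper uses MGFs) tends to $\exp\left(\lambda\left[e^{t}-1\right]\right)$ via $1+x\approx e^{x}$. Your closing caveat is in fact more careful than the paper itself: the paper justifies smallness of the relevant terms by bounding $p_i(1-p_i)^{n}\leq 1/(en)$, but the Bernoulli success probability that actually enters the product carries an extra factor of $n$ and can be as large as roughly $1/e$ when $p_i\approx 1/n$, so your explicit identification of the limiting regimes in which $\max_i\pi_i(n)\to 0$ (and the Le Cam total-variation bound) is a genuine tightening of the argument rather than a deviation from it.
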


\begin{proof}
The proof idea is to treat each instance separately as we did for variance and bias, but this time focus on the moment generating functions. The independence assumption lets us recover the moment generating function of $\Ne(n)$ from multiplying the individual functions.

\iftechreport
	We want to show $\Ne(n)$'s moment generating function (MGF) matches that of a Poisson distribution with $\lambda$: $ M(t) = \exp\left(\lambda [e^{t}-1]\right)$.
	
	As in the proof of \autoref{eq:variance}, we think of $\Ne(n)$ as a sum of independent binary random variables $X_i$, one per instance. Each of these variables has a moment generating function $M_{X_i}(t) = 1+ \pi_i(e^{t} - 1)$.
		Because $1+x \approx \exp(x)$ for small $x$, and $\pi_i(e^{t} - 1)$ will be small, then $1+ \pi_i(e^{t} - 1) \approx \exp(\pi_i(e^{t} - 1))$.  Note $\pi_i(e^{t} - 1)$ is always eventually small for some $n$ because $\pi_i(n+1) = p_i (1-p_i)^{n} \leq p_i e^{-np_i} \leq 1/en$.
		
		Because the MGF of a sum of independent random variables  is the product of the terms' MGFs, we arrive at: \\
		 $M_{\Ne(n)} = \prod_i M_{X_i}(t) = \exp\left( \left[\sum_i \pi_i\right] \left[e^{t} - 1\right]\right)$
\else
 Refer to \cite{techreport} for full details.
\fi
\end{proof}

\iftechreport
\subsubsection{Instances spanning multiple chunks}
A corner case when deciding which chunk to sample from is how to deal with instances spanning multiple chunks: how do we update the counts?

If instances span multiple chunks, for example a traffic light that spans across the boundaries of two neighboring chunks, \autoref{eq:estimate0} is still accurate with the caveat that $\Ne_j(n_j)$ is interpreted as the number of instances seen exactly once {\em globally} and which were found in chunk $j$. The same object found once in two chunks $j$ and $k$ does not contribute to either $\Ne_j$ or to $\Ne_k$, even though each chunk has seen it only once. The derivation of this rule is similar to that in the previous section. In practice, if only a few rare instances span multiple chunks then results are almost the same and this adjustment does not need to be implemented.

At runtime, the numerator $\Ne_{j}$  only increases in value only the first time we find a new result globally, decreases back as soon as we find it again either in the same chunk or elsewhere, and finding it a third time does not change it. Meanwhile $n_j$ increases upon sampling a frame from that chunk. This natural relative increase and decrease of $\Ne_j$ with respect to each other allows \sys\ to seamlessly shift where it allocates samples over time.

Here we prove \autoref{eq:estimate0} is also valid when different chunks may share instances. Assume we have sampled $n_1$ frames from chunk 1, $n_2$ from chunk 2, etc. Assume instance $i$ can appear in multiple chunks: with probability $p_{i1}$ of being seen after sampling chunk 1, $p_{i2}$ of being seen after sampling chunk 2, and so on.  We will assume we are working with chunk 1, without loss of generality. The expected number of new instances if we sample once more from chunk $1$ is:

\begin{equation*}
	R_1(n+1) = \sum_{i=1}^{N}\left[ p_{i1}(1-p_{i1})^{n_1}
 \prod_{j=2}^{M} (1-p_{ij})^{n_j}\right]
\end{equation*}

Similarly, the expected number of instances seen exactly once in chunk $1$, and in no other chunk up to this point is

\begin{equation*}
	\Ne_1 = \sum_{i=1}^{N} \left[ n_1 p_{i1}(1-p_{i1})^{n_1 - 1}\prod_{j=2}^{M} (1-p_{ij})^{n_j} \right]
\end{equation*}

In both equations, the expression $\prod_{j=2}^{M} (1-p_{ij})^{n_j}$ factors in the need for instance $i$ not to occur while sampling chunks 2 to $M$. We abbreviate this factor as $q_i$. When instances only show up in one chunk, $q_i = 1$, and everything is the same as in \autoref{eq:estimate0}.

The expected error is:
\begin{equation*}
	\Ne_1(n_1)/n_1 - R_1(n_1+1) = \sum_{i=1}^{N} \left[p_{i1}^2(1-p_{i1})^{n_1 - 1}q_i\right]
\end{equation*}

which again is term-by-term smaller than $\Ne_1(n_1)/n_1$ by a factor of $p_i$
\fi 

\subsection{Thompson Sampling}
Now we use the previous results to design a sampling strategy.  The goal is to  pick among chunks $j$, balancing both the desire for a large $\hat{R_j}(n_j)$ and the unreliability of the point estimate when $n_j$ is small. Thompson sampling~\cite{thompson_sampling} is one way to automate this process. Thompson sampling works by modeling unknown quantities such as $R_j$, not just with a point estimate such as $\hat{R_j}$, but with a distribution over its possible values informed by the uncertainty in the estimate. Then, instead of using the point estimate $\hat{R}$ directly to guide decisions, we use a sample from $\hat{R}'s$ distribution. In our implementation, we choose to model the uncertainty of our estimate $\hat{R}_j(n+1)$ as following a Gamma distribution. A Gamma distribution is shaped much like the Normal, except it is restricted to non-negative values. When its mean is a high positive number compared to its standard deviation it appears much like a Normal. When its mean is near 0 then its shape changes to have single mode at 0 (we show some different shapes in \autoref{fig:estimates}, in solid red). Among other uses, the Gamma distribution is a way to model the uncertainty of the hidden, positive parameter $\lambda$ of a Poisson distribution which we only get to  interact with through sampling. The Gamma distribution is fully described by parameters $\alpha$ and $\beta$, where both are positive real numbers. We use $\alpha=\Ne_{j}$ and $\beta=n_j$ because the mean for such a distribution is $\alpha/\beta$, or $\Ne_{j}/n_j$ in our case, which is by construction consistent with \autoref{eq:estimate0}; and the variance is $\alpha/\beta^2$, or $\Ne_{j}/n_j^2$ in our case, which is by construction consistent with the bound \autoref{eq:variance}. Finally, the Gamma distribution is not defined when $\alpha$ or $\beta$ are 0, so we need both a way to deal with the scenario where $\Ne(n) = 0$ which  happens at the start, when objects are rare, or when only a few objects are left.  We do this by adding a small quantity $\alpha_0$ and $\beta_0$ to both terms, obtaining the distribution below.

\begin{align}
\label{eq:gamma}
R_j(n_j+1) \sim \Gamma(\alpha =\Ne_j + \alpha_0, \beta=n_j + \beta_0)
\end{align}

We used $\alpha_0 = .1$ and $\beta_0 = 1$ in practice, though we did not observe a strong dependence on this value choice. 
We also experimented with alternatives to Thompson sampling, specifically Bayes-UCB~\cite{bayesucb}, which uses upper distribution quantiles based also on \autoref{eq:gamma} instead of samples to make decisions,  but again we did not observe different results.

\subsection{Empirical Validation}
\label{sec:emp-simulation}
In this section we provide an empirical validation of \autoref{eq:estimate0}, and \autoref{eq:gamma}. The question we are interested in is: given an observed $\Ne$ and $n$, what is the true $R(n+1)$, and how does it compare to the belief distribution $\Gamma(\Ne,n)$ from \autoref{eq:gamma}.

 We ran a series of simulation experiments. We first generate 1000 $p_1,...p_{1000}$ at random to represent 1000 results with different durations. To ensure there is skew in the $p$ we use a lognormal distribution to generate the $p_i$. To illustrate the skew in the values, the smallest $p_i$ is $3\times10^{-6}$, while the $\max p_i = .15$.  The parameters $\mu_p$ and $\sigma_p$ are $3\times10^{-3}$ and $8\times10^{-3}$ respectively. For a dataset with $1$ million frames (about 10 hours of video), these durations correspond to objects spanning from $1/10$ of a second up to about $1.5$ hours, more skew than what normally occurs.

Then, we model the sampling of a random frame as follows: each instance is in the frame independently at random with probability $p_i$. To decide which of the instances will show up in our frame, we simulate tossing 1000 coins independently, each with their own $p_i$, and the positive draws give us the subset of instances visible in that frame. We then proceed drawing these samples sequentially, tracking the number of frames we have sampled $n$ and how many instances we have seen exactly once, $\Ne$. We also record $\text{E}\left[R(n+1)\right]$: the expected number of new instances we can expect in a new frame sampled, which is possible because we can compute it directly as $\Sigma_{i}^{N} [i  \notin	\text{ seen }(n)] \cdot p_i$ because in the simulation we know the remaining unknown instances and know their hidden probabilities $p_i$, so we compute $\text{E}\left[R(n+1)\right]$. We sample frames up to $n=180000$, and repeat the experiment 10K times, obtaining hundreds of millions of tuples of the form $(n, \Ne, R(n+1))$ for our fixed set of $p_i$.   Using this data, we can answer our original question: given an $(\Ne, n)$ pair, what is the histogram of the actual $R(n+1)$ across runs and how does it compare to \autoref{eq:gamma}. We show these histograms for six pairs of $n$ and $\Ne$ in \autoref{fig:estimates}.
	 
	\begin{figure}[!ht]
\includegraphics[width=3.2in, trim={.cm .cm .cm .cm},clip]{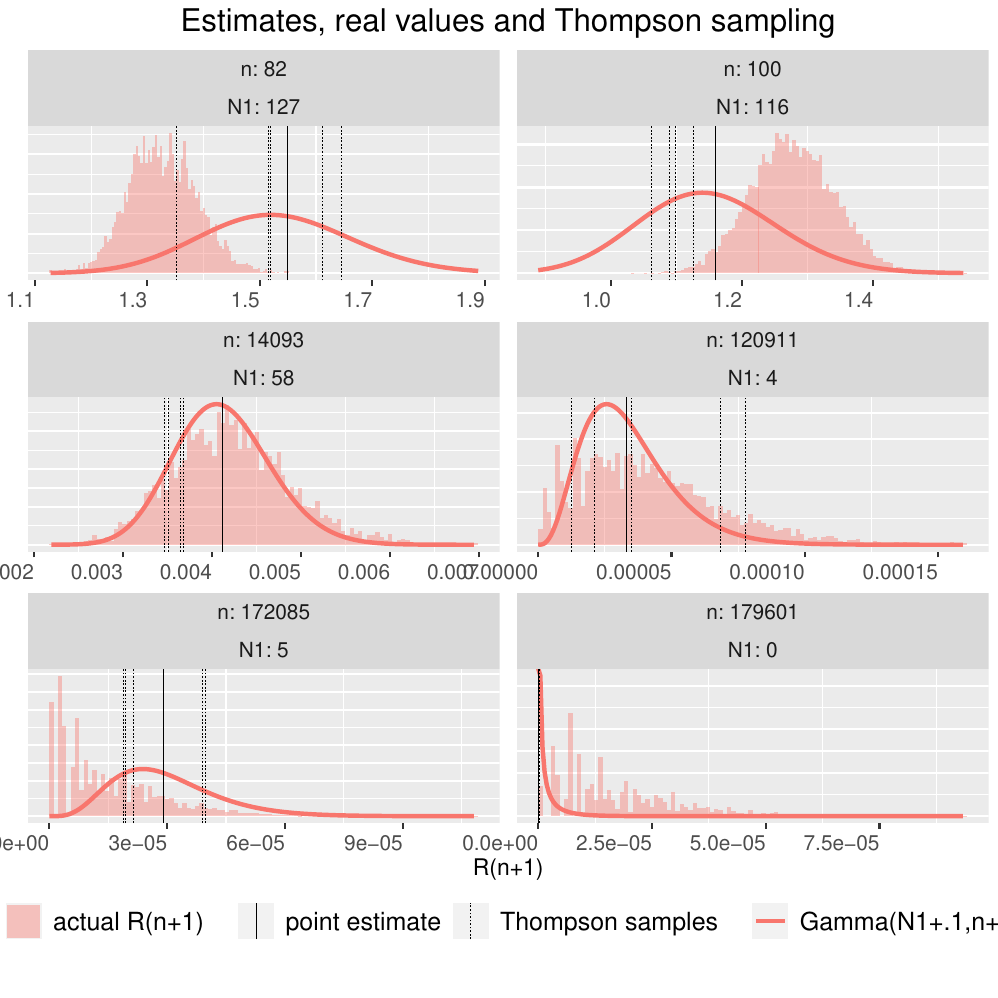}
\caption{Comparing our Gamma heuristic of \autoref{eq:gamma} with a histogram of the true values $R(n+1)$ from a simulation with heavily skewed $p_i$. The details of the simulation are discussed in Section \ref{sec:emp-simulation}. 
The histograms show the range of values seen for $R(n+1)$ when we have the observed $\Ne$ and $n$. The point estimate $\Ne/n$ of (\autoref{eq:estimate0}) is shown as a solid vertical line. The belief distribution density of \autoref{eq:gamma} is plotted as a thicker orange line, and five samples drawn from that distribution are shown as  dashed vertical lines. 
}

\label{fig:estimates}
\end{figure}	

	\autoref{fig:estimates} shows a mix of three important scenarios. The first two subplots with $n \leq 100$, representative of the early stages of sampling. Here we see the $\Gamma$ model has substantially more variance than the underlying true distribution of $R(n+1)$. This is intuitively expected: when $n=0$ the value of $R(1)$ is $\sum p$, which our estimator does not know. As $n$ grows to mid range values (next two plots), we see that the curve fits the histograms very well, and also that the curve keeps shifting left to lower and lower orders of magnitude on the x axis. Here we see that the one-sided nature of the Gamma distribution fits the data better than a bell shaped curve.
	The final two subplots show scenarios where $n$ has grown large and $\Ne$ potentially very small, including a case where $\Ne = 0$. In that last subplot, we see the effect of having the extra $\alpha_0$ in \autoref{eq:gamma}, which means Thompson sampling will continue producing non-zero values at random and we will eventually correct our estimate when we find a new instance. The bias error is not large enough despite the skew in $p$. 
	
	\begin{highlightenv}
	Finally, in reality it is possible that the different events  are not independent of each other, which is an assumption going into the variance estimate \autoref{eq:variance}.	\hlinelabel{line:dependence}  We tested the variance estimate directly on the objects of the Berkeley Deep Drive (BDD) multi-object tracking (MOT) dataset, and found that the 95\% confidence bound derived from the \autoref{eq:variance} includes the actual expected reward about 80\% of the time (with some variation across classes). In general, the trend was that our variance estimate is a slight underestimate, suggesting some dependency (co-occurring of events) causes the extra variance observed in practice.

	\end{highlightenv}

\subsection{Algorithm} \label{sec:algorithm}
Having derived and demonstrated how to estimate $R$ for different chunks, we now explain \sys step by step in \autoref{alg:pseudocode}, without having any prior information about $N$ or $\vec{p}$.

\begin{algorithm}
\SetKwInOut{Input}{input}\SetKwInOut{Output}{output}
\SetKwFunction{len}{len}
\DontPrintSemicolon
\Input{video, chunks, detector, discrim, result\_limit}
\Output{ans}
 $\text{ans} \gets$ [],
 $\Ne \gets$ [0,0,$\dots$,0],
 $n \gets $ [0,0,$\dots$,0]\;
 
 \While{\len{$\text{ans}$} $<\text{result\_limit}$}{
   \tcp{1) choice of chunk and frame}
    \For{$j\gets1$ \KwTo $M$}{
    	$\text{R}_j \gets \Gamma(\Ne[j] + \alpha_0,n[j] + \beta_0).\text{sample}()$\label{line:ts}\BlankLine}
    	
   $j^* \gets \argmax_j {R}_j$\;\label{line:best}
   $\text{frame\_id} \gets \text{chunks}[j^*].\text{sample}()$\;\label{line:chunk}
   \BlankLine
   \tcp{2) io,decode,detect,match}
    $\text{rgb\_frame} \gets \text{video}.\text{read\_and\_decode}(\text{frame\_id})$\;\label{line:io}
    $\text{dets} \gets \text{detector}(\text{rgb\_frame})$\;\label{line:det}
   \tcp{ $ \text{d}_0$ are the unmatched dets}
   \tcp{ $ \text{d}_1$ are dets with only one match}
   $\text{d}_0,\text{d}_1 \gets \text{discrim}.\text{get\_matches}(\text{frame\_id},\text{dets})$\;\label{line:match}
   \BlankLine
   \tcp{3) update state}
   $ \Ne[j^*] \gets \Ne[j^*] + $\len{$\text{d}_0$} $-$ \len{$\text{d}_1$}\;\label{line:updateN}
   $n[j^*] \gets n[j^*] + 1$\;\label{line:updaten}
   $\text{discrim}.\text{add}(\text{frame\_id},\text{dets})$\;\label{line:add}
   ans.add($\text{d}_0$)

}

\caption{\sys}
\label{alg:pseudocode}
\end{algorithm}

The inputs to the algorithm are:

\begin{itemize} 
\renewcommand\labelitemi{-}
\item \texttt{video}: The video data, either a single video or a collection of files. 
\item \texttt{chunks}: How we have partitioned the frames in the video.  There are $M$ chunks total. 
\item \texttt{object detector}:  Processes video frames and returns object detections (boxes) relevant to the query.
\item \texttt{discrim}: decides whether a detection is new or matches previous detections.
\item \texttt{result\_limit}: an indication of when to stop. 
\end{itemize}

The loop consists of three parts: picking a frame, processing the frame, and updating the sampler state. First, to pick a frame ExSample decides which frame to process next. It applies the Thompson sampling step in \autoref{line:ts}, where we draw a separate sample $R_j$ from the belief distribution \autoref{eq:gamma} for each of the chunks, which is then used in \autoref{line:best} to pick the highest scoring chunk. As in the rest of the paper, $j$ indexes over chunks.  During the first execution of the \texttt{while} loop all the belief distributions are identical, but Thompson sampling effectively breaks ties at random. Once decided on a best chunk index $j^*$, we sample a frame index at random from the chunk in \autoref{line:chunk}.

Second, ExSample reads and decodes the chosen frame, and applies the object detector on it (\autoref{line:det}). Then, we pass the computed detections on to a discriminator, which compares the detections with objects returned earlier during processing in other frames, and decides whether each detection corresponds to a new, distinct object. The discriminator returns two subsets: $d_0$, the detections that did not match with any previous results (new objects), and $d_1$, the detections that matched exactly once with any previous detection. We only use the size of those sets to update our statistics in the next stage. This frame processing stage of ExSample is the main bottleneck, and is dominated first by the detector call in \autoref{line:det}, and second by the random read and decode in \autoref{line:io}.

Third, we update the state of our algorithm, updating $\Ne$ and $n$ for the chunk we sampled. We store detections in the discriminator and append the new detections to the result set (\texttt{ans}). The amount of tracked state is proportional to the number of chunks, and to the number of results we detected so far.

\if{0}
\section{Notation Summary}
\label{sec:notation}
\noindent\begin{tabular}%
	{>{\raggedleft\arraybackslash}p{.5in}|%
	p{2.6in}%
}
$n$ & Number of frames sampled so far. Frames sampled and frames processed means the same thing.\\
 $N$ & Number of distinct results in the data. We treat the terms ``result'' and ``instance'' as synonyms.\\
 $i$ & Index variable over results. $i \in [1,N]$. \\
 $\Ne(n)$ & Number of results seen exactly once up until the $n^{th}$ sampled frame. We omit $n$ when is clear from context \\
 $N(n)$ & Number of unique results seen up to the $n^{th}$ sample.\\
 $\text{seen}(n)$ & set of $i$ seen after
 $n$ frames have been processed.\\ 
 $p_i$ & Probability of seeing result $i$ in a randomly drawn frame. It is proportional to duration in video. We treat duration and probability as synonyms.\\
  $R(n)$ & Number of new results we expect to find in the next sampled frame $R(n) = 
 \sum [i \notin \text{seen}(n-1)]\cdot p_i$  \\
 $\mu_p$ & mean duration $\Sigma p_i/N$ \\
 $\sigma_p$ & std. deviation of durations $p_i$ \\
 $\pi_i(n)$ & $p_i(1-p_i)^{n-1}$: the chance result $i$ appears first at the $n^{th}$ sampled frame. We may leave the $n$ implicit.\\
 $M$ & Number of chunks \\
 $j$ & Index variable over chunks. $j \in [1,M]$ \\
\end{tabular}
\fi
		
\subsection{Other Optimizations}
The implementation supplements \autoref{alg:pseudocode} with two  optimizations that integrate easily:

\vspace{.1in}
\noindent{\bf Batched sampling.}
\autoref{alg:pseudocode} processes one frame at a time, but on modern GPUs inference throughput is faster when performed on batches of images.  The code for a batched version is similar to that in \autoref{alg:pseudocode}, but on \autoref{line:ts} we draw $B$ samples per chunk $j$ instead of one sample from each belief distribution. The $\argmax$ code in \autoref{alg:pseudocode} then produces $B$ batch indices. The distribution over the indices  depends on Thompson sampling. The state update can also be done in batch form: all the updates to $\Ne_j$ and $n_j$ are commutative because they are additive.

\vspace{.1in}
\noindent{\bf Avoiding near duplicates within a single chunk.}
While \random\ sampling is both a reasonable baseline and a good method for sampling frames within selected chunks, \random\ allows samples to happen very close to each other in quick succession: for example in a 1000 hour video, random sampling is likely to start sampling frames within the same one hour block after having sampled only about 30 different hours, instead of after having sampled most of the hours once. For this reason, we introduce a variation of random sampling, which we call \lsb, to deliberately avoid sampling temporally near previous samples when possible: by sampling one random frame out of every hour, then sampling one frame out of every not-yet sampled half hour at random, and so on, until eventually sampling the full dataset. We evaluate the separate effect of this change in our evaluation. We also use \lsb\ to sample within a chunk in ExSample, by modifying the internal implementation of the \texttt{chunk.sample()} method in \autoref{line:chunk}.

\section{Determinants of \sys limits }
\label{sec:analysis}
We have explained how \sys makes decisions on where to sample and how it adapts its sampling based on statistical estimates. In this section, we explain how much better than random \sys can be, and when it cannot be much better than random.  We first show a different (non practical) sampling method that upper bounds how well \sys can do. We show though simulation that \sys matches this upper bound in practice (though only after enough sampling has happened).  Second, we identify two important factors that affect how much better than random sampling \sys can be: the first factor is skew in how results are spread within the dataset, which is inherent to the data; the second is the number of chunks the data is split into, which the user chooses ahead of time. Using the same simulation we show how these factors affect \sys performance.
\subsection{Optimal chunk weights}
\label{sec:upperbound}
Here we derive an alternative way of sampling chunks that upper bounds \sys. Though it is not practical, it is helpful to understand when \sys would not be helpful and also whether \sys is choosing optimally rather than simply better than random.

\sys can be thought of as implementing a type of weighted sampling (though the weights are not explicitly computed). At any point during a run, \sys allocates $n$ samples across $M$ different chunks, de-facto assigning a weight of $n_j/n$ to each chunk.  A natural question is how this assignment fares against an optimal weight assignment chosen ahead of time (assuming a fixed target $n$). In this section we derive the optimal weight assignment (as a function of $n$)  which is also a good conceptual benchmark for \sys. This benchmark is not applicable in real scenarios, but helps to understand \sys and its limits. 
The number of results discovered by random sampling after $n$ samples follows the curve $N(n) = \sum 1 - (1-p_i)^n$. If we first split the data into $M$ chunks, conceptually we can think of instance $i$ having an M-dimensional vector $\vec{p} = (p_{ij})$, coordinate $j$ in this vector is the conditional probability of seeing instance $i$ when sampling from chunk $j$. The chance of sampling from a chunk is also an M dimensional vector $\vec{w} = (w_j)$, the total chance of sampling instance $i$ is the dot product $\vec{p_iw}$. The expected number of instances found after $n$ samples in that scenario is $\sum 1- (1-\vec{p_iw})^n$.  The optimal offline allocation of samples to chunks is therefore

\begin{equation}
\begin{split}
	\argmax_{\vec{w}}  \sum 1 - (1-\vec{p_iw})^n  \label{eq:optimal}
\end{split}
\end{equation}

where $\vec{w}$ is constrained to be a valid weight vector (entries are non negatives and add up to 1). In the unrealistic scenario where $\vec{p_i}$ for all chunks were known then we can solve for $w$ using a package such as \cite{cvxpy}. We later show empirically in \autoref{fig:chunks} that \sys matches this benchmark as more samples are taken.   

Note uniform random sampling corresponds to equal weights and is optimal when all chunks share similar probabilities. However, when some chunks have more results than others or when the $p_i$ change across chunks we expect \sys will discover better sample allocations.
	
\subsection{Instance skew across chunks}
\label{sec:simulation}

\begin{figure*}[thb]
\centering
\vspace{-.1in}
\includegraphics[width=5.5in, trim={.1cm .22cm .1cm .22cm},clip]{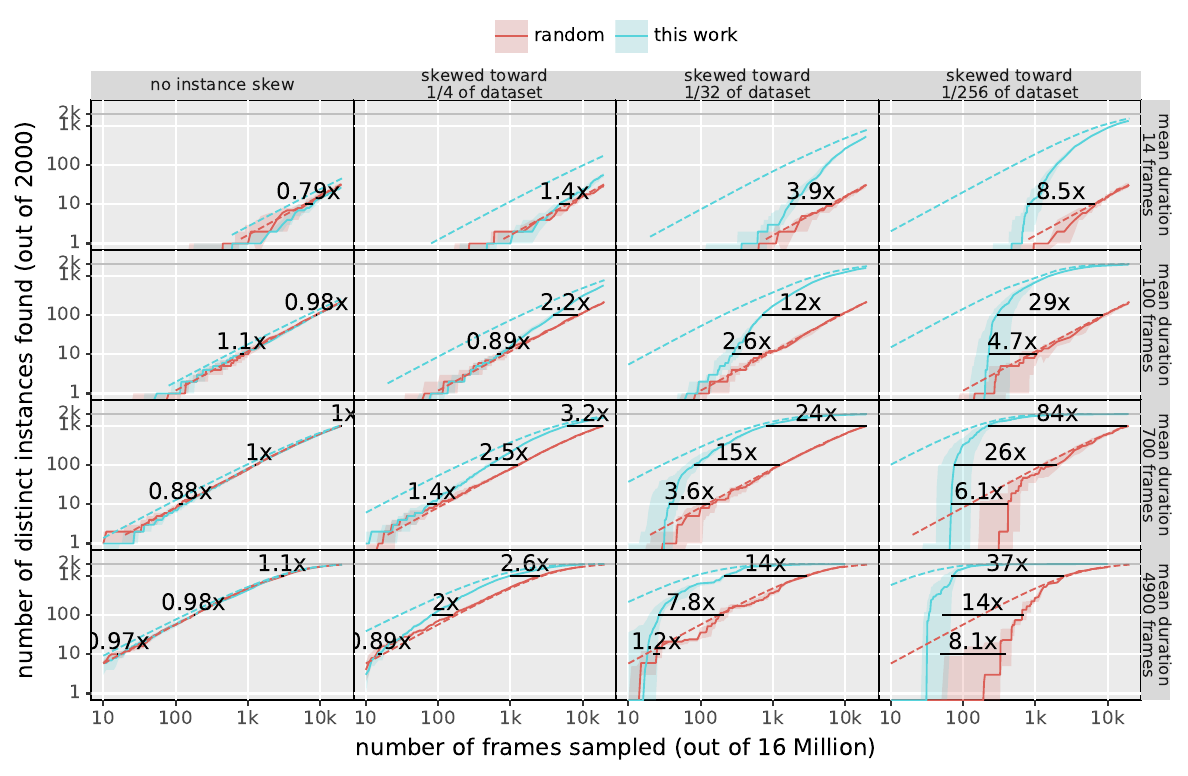}
\caption{\hl{Simulated savings in frames from \sys range from 1x to 84x, depending on instance skew (increasing skew from left to right, starting with no skew), and on the average duration of a instance in frames (increasing from top to bottom).  Solid lines show the median frames sampled by \sys and random. Shaded areas mark 25--75 percentile. We label with text the savings in samples needed to reach 10, 100, and 1000 results. The dashed blue line shows the expected results if samples were allocated optimally based on perfect prior knowledge of chunk statistics. The dashed red line shows the expected results from random. For a complete explanation see Section \ref{sec:simulation}.}
}
\label{fig:simplot}
\end{figure*}

This section explores how different parameters, in particular skew in instances and average duration of instances,  affect \sys performance in simulation.  

Instance skew is a key parameter governing performance of \sys.  If we knew 95\% of our results lie on the first half of the dataset, then we could allocate all our samples to that first half. This  would boost the $p_i$ of those results in the first half by $2\times$,  without requiring knowledge of their precise locations. Hence, we could expect about a $2\times$ savings in the frames needed to reach the same number of results. Skew arises in datasets whenever the occurrence of an event correlates relevant latent factor that varies across time e.g., time of day or location (city, country, highway, camera angle) where video is taken. \sys exploits this skew via sampling  when it exists in the data, even though \sys is not aware of these correlated factors or expects them as inputs.
	
%
%
%

In our simulation we show that under a wide variety of situations \sys can outperform \random\ by 2x to 80x, and it never performs significantly worse. We also explore in detail situations where  \random\ is competitive with \sys.  Results are shown results in \autoref{fig:simplot}.

	To run the simulation, we fixed the number of instances N to 2000, and the number of frames to 16 million. 
	We placed these 2000 instances according to a normal distribution centered in these 16 million frames, varying the standard deviation to result in no skew (left column in the \autoref{fig:simplot}) and skew where 95\% of the instances appear in the center 1/4, 1/32, and 1/256 of the frames (additional columns in the figure).
To generate varied durations for each of these instances, we use a LogNormal distribution with a target mean of 700 frames. This creates a set of durations where the shortest one is around 50 frames and the longest is around 5000 frames. To change the average duration we simply multiply or divide all durations by powers of 7, to get average durations of $700/49 = 14$, $700/7 = 100$, $700$, $700\times 7=4900$.  These correspond to the rows in \autoref{fig:simplot}. 
	
	Once we have fixed these parameters, the different algorithms proceed by sampling frames, and we track the instances that were found (y axis of subplots) as we increase the number of samples we processed (x axis). For \sys, we divide the frames into 128 chunks. 
	We run each experiment 21 times. In \autoref{fig:simplot} we show the median trajectory (solid line) as well as a confidence band computed from the 25th to the 75th percentile (shaded). Additionally, the dashed lines show the expected $N$ following the optimal allocation for each $n$, computed using the formula in \autoref{eq:optimal}.  \sys and \random\ start off similarly, which is expected because \sys starts by sampling uniformly at random. As samples accumulate \sys's trajectory moves more and more toward the dashed blue line which represents the expected $N$ if the allocation of samples across chunks was optimal from the beginning. 
We can see \sys generally outperforms \random, but performs similarly in two cases: 1) when there is very little skew in the locations of results (top row of figure) and 2) when results are very rare, which makes getting to the first result equally difficult for both (left column of figure).



\if{0}
\TODO{these equations inspired the parameters I used for the simulation, but I'm not finding it easy to weave them in}
\begin{eqnarray*}
	R(n+1) \gtrapprox N\left[ \mu_p - n(\mu^2_p +\sigma^2_p)\right]
\end{eqnarray*}

\if{0}
\begin{eqnarray*}
\max_j N_j\left[ \mu_{pj} - n_j(\mu^2_{pj} +\sigma^2_{pj})\right]
\end{eqnarray*}
\fi

\begin{eqnarray*}
.95SN\left[ \mu_{p} - n_j S(\mu^2_{p} +\sigma^2_{p})\right]
\end{eqnarray*}

\TODO{what about duration skew $\sigma_p$}
Takeaways here: by restricting sampling to the subset with the skew, we increase the expected payoff of the first sample to $.95SN\mu_p$ from $N\mu_p$, but also affect how fast the rate of new results decays over time (the new S factor in the negative term), it drops $S\times$ as fast. If we split that further into chunks, it suggests you need to also be ready to switch to sampling other locations. But having picked .95 as the skew definition makes it hard to make this argument flow. Will come back to this issue later.

\fi


\if{0}
Intuitively, we expect $R(n)$ to grow proportionally to $N$, and to $\mu_p$. This is exactly true early on: the first sample drawn has an expected return of $R(1) = N\mu_p$ instances. For larger $n$, this intuition greatly overestimates the number of results we get. This can happen even as early as with $n=2$. For example, with two objects $p_1 = p_2 = .5$, $R(1) = 2\times.5 = 1$, and $R(2) = .5(1 -.5) + .5(1-.5) = .5$. In contrast, still with $N=2$ and $\mu_p=.5$ but with $p_1 = 1, p_2 \approx 0$, we have $R(1)$  remains $1$, but $R(2) \approx 0$. In both cases $N$ and $\mu_p$ are the same, but in the high skew case $R$ becomes 0 after only one sample.  We formalize this intuition by providing a bound on $R$ that accounts for both $\mu_p$ and $\sigma_p$ skew in the probabilities.

This bound might seem loose in the sense that as $n$ grows the right-hand side becomes negative, which is clearly an underestimate, but is tight for $n = 1$ and $n=2$.  For $n=1$, the bound becomes $N\mu_p$ as before, and for $n=2$, the bound is $N[\mu_p - \mu_p^2 - \sigma_p^2]$. In the examples above, with $p_1 = 1 - \epsilon$ and $p_2 = \epsilon$, $\sigma^2_p = 2*.5^2/2 = .25$ and $\mu_p = .5$, so $\mu_p^2 = .25$, and $R(2) = 2[.5 - .25 - .25] = 0$. It intuitively tells us that a sufficient condition for $R(n)$  not to drop very quickly is for $\sigma_p^2$ to be small. For example, if all $p_i$ are equal (e.g., all events have equal duration), then finding any one $p_i$ will cause $R(n)$ to drop by a constant $\mu_p = p_i$, and this will happen with probability $p_i=\mu_p$, so we expect a decrease of $\mu_p^2$. If the $p_i$ are not constant, then with probability $p_i$ $R(n)$ will drop by $p_i$, and the larger $p_i$ are much more likely to be the ones sampled, which is where the $\sigma^2_p$ term comes from.

If there is a fraction 1/$F$ of the dataset where most of the $N$ results concentrate, then we can also apply the above equation to a hypothetical case where 

when splitting the data into chunks, we can apply similar reasoning on each chunk. Splitting into chunks is advantageous when 

is large compared to

The effect of sampling randomly within a chunk rather than from the whole dataset are twofold, on the one hand, the duration of an event as a fraction of the chunk size becomes larger. $\mu_{pj} = M\mu_p$ (same with $\sigma_p$). On the other hand, the total number of results may drop. If $N_j$ is $N/M$, there is no net gain. On the other hand, if there is some M for which some chunk has most of the instances, i.e., $N_j > .9 N$, then we get

This means that we have boosted the payoff by at most $M$ early on, 
and will accumulate these instances after only $n/M$ samples. 

So, at least initially when $n$ is small, there are large differences between $N_j$ across chunks.
 This is likely to happen when there is some underlying relation between the object frequencies and durations within a chunk and some latent factor such as location, weather, time of day, season, etc.  This advantage is sustained longest when the average durations $\mu_{pj}$ are small and the skew in durations $\sigma_{pj}$ is small as well (we are less likely to find repeat results).

Note that this reasoning applies in the case of random, the $p_i$ arise from the number of frames in which object $i$ appears. However, this reasoning also applies to any weighted sampling strategy, or one based on scoring frames ahead of time and sampling proportional to that score: all these  can be thought of as inducing a set of $p_i$, and hopefully maximizing a $\mu_p$. The above  tells us that increasing $\mu_p$ while increasing $\sigma_p$ can mean we find fewer new examples even in the short run.
\fi

\subsection{Number of chunks}
\label{sec:chunks}

\begin{figure}
\centering
    \centering
    \includegraphics[trim={.25cm  .2cm .25cm .2cm},clip,scale=.9]{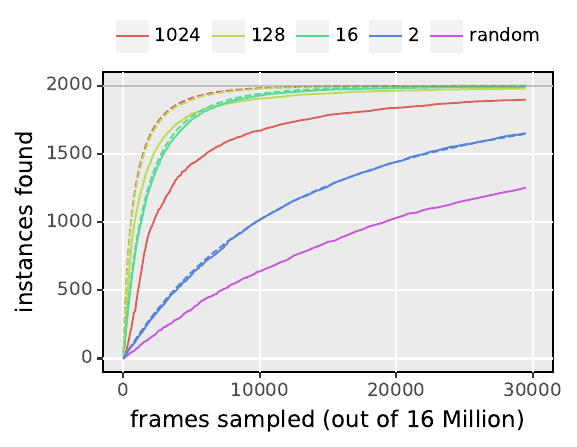}
\caption{Varying the number of chunks for a fixed workload. For 2 and 16 chunks, the dashed lines corresponding to optimal sample allocation using
\autoref{eq:optimal} match closely with results using \sys on the simulated data. For 128, and even more so for 1024 chunks, there is a noticeable gap between the dashed and solid lines.
}
\label{fig:chunks}

\vspace{-.2in}
\end{figure}

The way we split the dataset into chunks is an external parameter that affects the performance of \sys. We consider two extremes:
 1) Suppose we have a single chunk. This makes \sys equivalent to random sampling.  Fewer chunks reduce \sys ability to exploit any skew. For example, the extreme case using two chunks imposes a hard ceiling on savings of $2\times$, even if the skew is much larger. 2) On the opposite extreme, one chunk per image frame would also be equivalent to random sampling but for a different reason: In this case, picking a chunk is equivalent to picking a frame.  Since there is no data initially about any chunks, Thompson sampling is equivalent to random sampling. Therefore, the extreme cases are clearly not practical and the choice of chunks can affect savings in principle. 
 In this section we explore this problem through simulation and we find we can vary the number of chunks across three orders of magnitude, while still remaining efficient. 
  
 In \autoref{fig:chunks} we show how \sys behaves for a range of chunk scenarios spanning several orders of magnitude. In this simulation, we fix instance skew as defined before to 32 and $p_i$ to a mean duration of 700 frames, the same values as the third column and third row of \autoref{fig:simplot}. We vary the number of chunks from 1 to 1024. Dashed lines show the number of instances found when using (static) optimal weights from \autoref{eq:optimal}, computed as a function of the x axis. The more chunks there are, the steeper the dashed lines because we have exact knowledge of frame distribution and can exploit skew at smaller time scales. 
Solid lines in \autoref{fig:chunks} show  the median performance of \sys under the same chunk and data settings, but without knowledge ahead of time. Notably, increasing the number of chunks can decrease performance: for example, going from 128 to 1024 chunks shows a decrease, even though both have similar optimal allocation (dashed) curves. As we increase the number of chunks $M$, we also increase the number of samples \sys needs to take just to be able to tell which chunks are more promising: for example, when working with 1024 chunks, just for each chunk to be sampled once we would need to allocate 1024 samples, and this by itself would be too little information about which chunks are more promising. We note that in our simulation we varied the number of chunks by three orders of magnitude and still see a benefit of chunking versus random across all settings; however, the benefits are non monotonic.

\if{0}
\subsection{Generalizations}

{\bf{Generalized instance durations}} Throughout the paper we have  used $p_i$ as proxy for duration, assuming we select frames with uniform random sampling. However, we could weigh frames non-uniformly at random in the first place, for example by using some precomputed per-frame score. If we use a non-uniform weight to sample the frames, we effectively induce a different set of $p'_i$ for each result, ideally one with $\mu_{p'} \gg \mu_p$. The estimates for the relative value of different chunks will still be correct since \sys\ is designed to work with any  underlying $p_i$.

{\bf{Generalized chunks}} We have introduced the idea of a chunk as a partitioning of the data. However, it would be possible for chunks to overlap, and this is equivalent to having instances that span multiple chunks. This choice is only meaningful if the chunks are different in some way, for example one chunk can be a half hour of video sampled uniformly at random, while a different chunk can be the same half hour of video sampled under a different set of weights, using the idea of generalized $p_i$. 
\fi

\section{Evaluation} \label{sec:eval}

Our goals for this evaluation are to demonstrate the benefits of \sys\ on real data compared to alternatives, including random sampling and proxy models. We show that, on these challenging datasets, \sys\ achieves savings of up to 6x over random sampling. We also show that by avoiding the large upfront cost of scoring every frame in the dataset with a query-specific proxy model, \sys/ is able to find results quicker.

\subsection{Experimental Setup}

\noindent
\textbf{Baselines.}
We compare \sys against two of the baselines discussed in Section \ref{sec:objquery}: 1) uniform random sampling and 2) state-of-the-art representative of the proxy model idea (BlazeIt)~\cite{blazeit}. BlazeIt is a  state-of-the-art video processing system that processes several different types of queries, including queries with more complex predicates and aggregates. In this evaluation we are only concerned with how a system like BlazeIt, as a representative of proxy-based approaches, handles the specific case of distinct object queries.

\medskip
\noindent
\textbf{Implementation.}
Random sampling: BlazeIt and \sys are at their core sampling loops where the choice of which frame to process next is based on an algorithm-specific decision. Random simply chooses the next frame randomly. \sys chooses based on \autoref{alg:pseudocode}. Proxy-based approaches such as BlazeIt choose the frame with the next highest score;  \highlight{these scores are pre-computed by running a query-specific proxy model over every frame of the dataset.} \hlinelabel{line:score}
	
We implement this sampling loop in Python, using PyTorch to run inference on a GPU. For the object detector, we use Faster-RCNN with a ResNet-50 backbone. To achieve fast, random access frame-decoding rates we use the Hwang library from the Scanner project\cite{scanner}, and re-encode our video data to insert keyframes every 20 frames.



\medskip
\noindent
\textbf{Datasets.} \hlinelabel{line:datasets} We use six different datasets in this evaluation, which we call dashcam, BDD-1k, BDD MOT, archie, amsterdam, and night-street.  The dashcam dataset consists of 10 hours of video, or over 1.1 million video frames, collected from a dashcam over several drives in cities and highways. Each drive can range from around 20 minutes to several hours. \highlight{Drives longer than 20 minutes are split into 20 minute chunks.} The BDD dataset used for this evaluation consists of a random sample of 1000 random video clips from the Berkeley Deep Drive Dataset\cite{bdd}. Because BDD video clips are less than one minute long, \highlight{we are forced to use each small clip as an individual chunk for a total of 1000 chunks}. This constraint makes it a challenging scenario for \sys, as we explained in \autoref{sec:analysis}. \highlight{ BDD MOT (multi-object tracking) is a different subset of the BDD dataset of 1600 short clips of about 200 frames each. Unlike the other datasets, it manually labelled ground truth, including instance ids.} 
The other three datasets: archie, amsterdam, and night-steet are video from fixed cameras overlooking urban locations. Each consists of 20 hours of video. These datasets are used in the evaluation of related work on optimizing (other) types of video queries on static camera datasets~\cite{blazeit,focus,chameleon}. \highlight{We also use 20 minute chunks for these datasets, for a total of about 60 chunks.} There are two important differences between static camera and moving camera settings: 1) In the moving camera setting there is more room for instance skew depending on how the scenery changes over time (for example, the BDD dataset includes data from multiple cities and countries, as well as from highway, suburban and urban settings). 2) For proxy model based approaches, the moving camera setting is also more challenging to create an accurate proxy model for because the background changes constantly and the inputs to the proxy model vary more widely in appearance.\\
\hlinelabel{line:groundtruth} 
\noindent{\bf Queries and ground truth} We picked between six and eight objects for each of the datasets based on what they show.  None of the datasets have human-generated object instance labels that identify objects over time.  Therefore, we approximate ground truth by sequentially scanning every video in the dataset and running each frame through a reference object detector (we reuse the Faster-RCNN model that we apply during query execution). If any objects are detected, we match the bounding boxes with those from previous frames and resolve which correspond to the same instance. To match objects across neighboring frames, we employ an Intersection over Union (IoU) matching approach similar to SORT~\cite{SORT}. IoU matching is a simple baseline for multi-object tracking that leverages the output of an object detector and matches detection boxes based on overlap across adjacent frames. For all datasets we found we needed to fine-tune the object detector and the IoU matching to get to a reasonable quality ground truth.  This process involved manually annotated around a hundred frames for each of archie, amsterdam, and night-street, retraining Faster-RCNN and then running it over the full datasets to get better quality ground truth. For BDD and dashcam we fine-tuned Faster-RCNN with the BDD labels. For these three datasets we use object types, including but not limited to those used in \cite{blazeit}. \\
In addition to searching for different object classes, we also vary the limit parameter recall of 10\%, 50\% and 90\%, where recall is the fraction of distinct instances found.  These recall rates are meant to represent different kinds of applications:  10\% represents a scenario where an autonomous vehicle data scientist is looking for a few test examples, whereas a higher recall like 90\% would be more useful in an urban planning or mapping scenario where finding most objects is desired.

\subsection{Results: sampling immediately vs. proxy scoring overhead}
\label{sec:scantime}

The main advantage of \sys with respect to using a proxy model that first scores frames is that \sys requires no prior scoring phase, and therefore avoids a costly scan over the full video dataset. In   \autoref{tab:times} we show for all queries \sys will reach .9 recall before a proxy model has finished scoring the dataset. \sys reaches .1 and .5 recall orders of magnitude faster because the proxy model needs to score everything before it offers results, as it relies on returning the highest scoring frames to maximize its accuracy. 

\begin{table}
\smaller
\center
\begin{tabular}{lllrrr}
\toprule
             &    &  (\% instances)     &   10 &    50 &    90 \\
        & proxy &  & & & \\
dataset & (scan) & category &       &        &        \\
\midrule
BDD 1k & 54m & bike & 1m37s &  8m57s &    41m \\
             &    & bus & 1m17s & 10m38s &    49m \\
             &    & motor & 1m38s &  8m53s &    46m \\
             &    & person &   52s &  6m46s &    36m \\
             &    & rider & 1m31s & 10m14s &    45m \\
             &    & traffic light & 1m33s & 12m18s &    50m \\
             &    & traffic sign & 1m38s &    14m &    58m \\
             &    & truck &  1m8s & 10m39s &    50m \\
 \midrule      
BDD MOT & 53m & bicycle &   52s &  6m51s &    35m \\
             &    & bus &   31s &  3m18s &    21m \\
             &    & car & 1m31s &  8m21s &    30m \\
             &    & motorcycle &   49s &  6m38s &    39m \\
             &    & pedestrian &   41s &  4m51s &    24m \\
             &    & rider &   59s &  6m17s & 32m50s \\
             &    & trailer &   37s &  3m54s &    38m \\
             &    & train &   18s &     3m &    32m \\
             &    & truck &   36s &  3m57s & 20m36s \\
\midrule      
amsterdam & 9h50m & bicycle & 1m10s &  8m42s &    39m \\
             &    & boat &    2s &    14s &     4m \\
             &    & car &   45s &     7m & 23m33s \\
             &    & dog & 1m51s & 12m46s &  1h49m \\
             &    & motorcycle & 5m21s & 24m58s &  2h18m \\
             &    & person &   29s &  4m20s & 21m39s \\
             &    & truck &   46s &     9m &    39m \\
 \midrule      
archie & 9h49m & bicycle &  1m4s &     8m &    43m \\
             &    & bus &    1m &  6m47s &    58m \\
             &    & car &   46s &  4m36s & 10m35s \\
             &    & motorcycle & 3m10s &    22m &  1h57m \\
             &    & person &  1m5s &  7m32s &    50m \\
             &    & truck & 1m36s & 13m41s &  1h21m \\
 \midrule       
dashcam & 2h54m & bicycle &   32s &  5m38s &     1h \\
             &    & bus & 1m11s &    26m &  2h58m \\
             &    & fire hydrant & 1m40s &    16m &  1h15m \\
             &    & person &   20s &  4m22s &   1h8m \\
             &    & stop sign &   45s & 20m26s &  2h27m \\
             &    & traffic light &   26s &     7m &  1h21m \\
             &    & truck & 2m17s & 28m37s &  2h58m \\
 \midrule       
night street & 8h & bus & 1m27s &  9m55s &    52m \\
             &    & car &   12s &  2m21s &    11m \\
             &    & dog & 2m34s & 18m45s &  3h39m \\
             &    & motorcycle & 9m13s &  1h52m &  7h31m \\
             &    & person &   14s &  1m55s &    15m \\
             &    & truck & 1m10s &  9m59s &   1h4m \\
\bottomrule
\end{tabular}
\caption{Time for the scanning component of a proxy based approach vs. time taken by \sys. Across all queries and datasets, it is cheaper to reach 90\% of instances using \sys sampling than it is to scan and score frames prior to sampling, and much easier to reach 10\% and 50\% of instances. A separate comparison between \random and \sys is shown in \autoref{fig:ratioplot} }
\label{tab:times}
\end{table} 

The totals in \autoref{tab:times} were computed by measuring the scoring throughput we can sustain on our equipment (100 frames per second, bound by io+decode), and then estimating the time it would take to fully scan the dataset. For sampling, we measure how  many frames \sys needs to sample and process in that time and check how many results it would find, knowing from measurements that \sys processes frames at a rate of 20 frames per second, bound by the object detector throughput. Proxy scoring times reach up to 10 hours; this is the reason we prefer to estimate the time rather than implement and run it fully for each of the 34 queries we evaluate, as each would take hours to run.

\subsection{Time savings vs. recall level}
\label{sec:expresults}

The previous section showed the time it takes us to compute proxy scores for a whole dataset can be more than enough time for a weighted sampling like \sys to find most results in the data. Now we are interested in showing how much time \sys saves on real data to reach different levels of recall compared to \random, neither of which has a proxy score computation overhead, and hence can produce results immediately. We show results for three levels of recall over instances: .1, .5 and .9 in \autoref{fig:ratioplot}. The maximum is around 6x (leftmost) and the worst case (rightmost) is .75x for boats. The .9 percentile over the 100 bars in the plot is 3.7x and the .1 percentile is 1.2. The geometric mean of savings overall is 1.9 across all vertical bars.


\begin{figure}
  \centering
  \includegraphics[width=\linewidth]{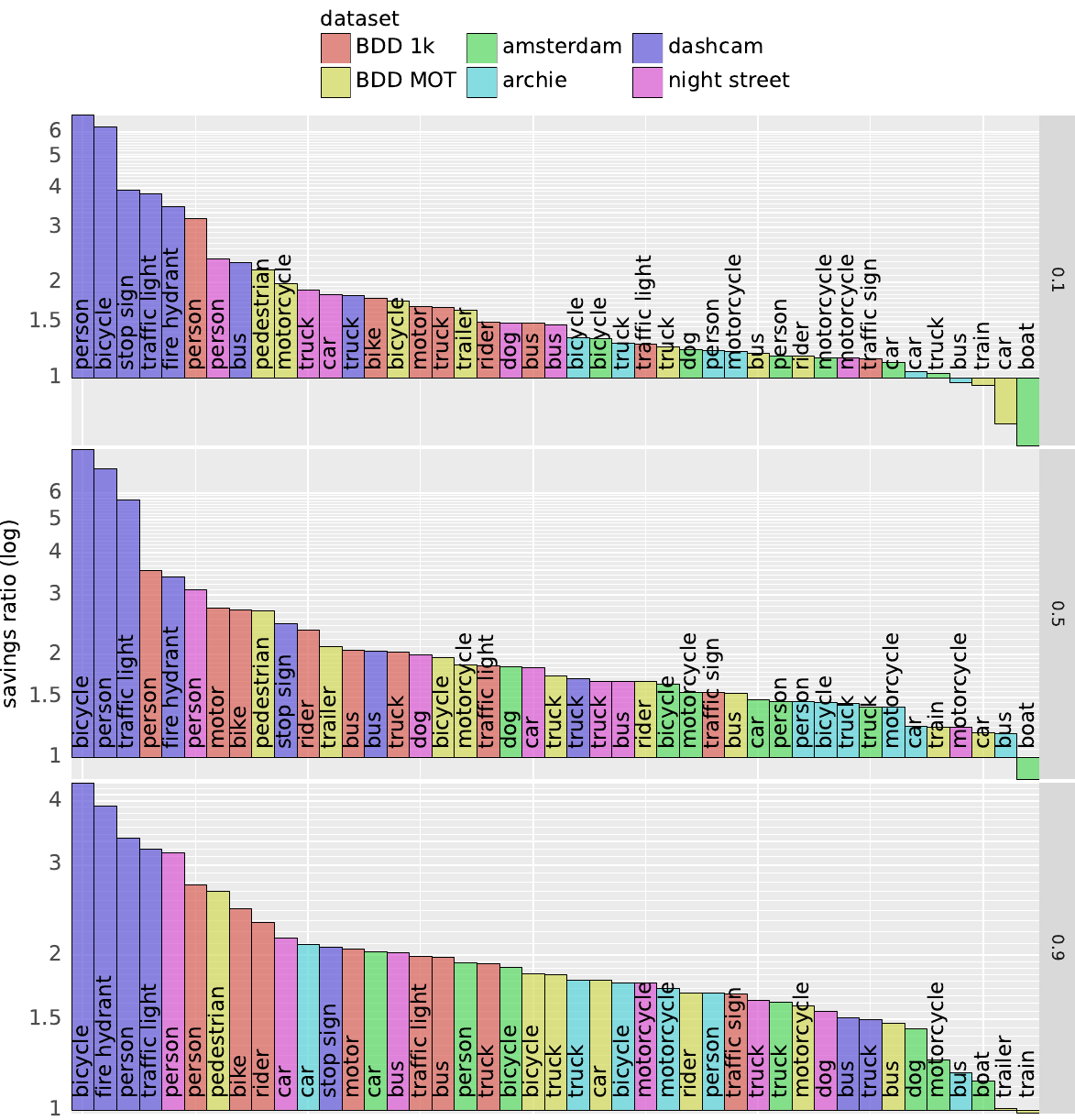}
\caption{Time savings ratio when using \sys vs. \random\ for all queries. The top panel corresponds to time savings to reach .1 of all instances, the middle panel to time savings for reaching .5, and the bottom panel to reach .9. }
\label{fig:ratioplot}
\end{figure}

\autoref{fig:skewplot} shows what the chunks for the more extreme cases of \autoref{fig:ratioplot}  look like in terms of skew and abundance. For example, bicycle on the dashcam dataset shows very large skew, which explains why it can achieve results of 3.7x. Next we have motorcycle in BDD, which even though has very large skew, has relatively low savings of 2x. The reason for the moderate gain is the large amount of chunks in the BDD dataset; it takes a while for \sys to sample all of them, so it takes a while to notice and take advantage of the skew. \autoref{fig:ratioplot} shows the bar for motorcycle (``motor'' under the BDD color) reaches a savings of 3x at recall .5, but has savings of less than 2x early on, likely because it takes time to identify the skew. Finally, ``boat'' in the amsterdam dataset, the worse performing query, and ``car'' in the archie dataset have very low skew, meaning \random should do just as well.  The person in the night-street (aka town-square) dataset has moderate amounts of skew and \sys is able to exploit it.

\autoref{fig:ratioplot} shows \sys produces robust gains across 5 datasets and a total of 34 queries, while \autoref{tab:times} shows that even a perfect proxy adds too much overhead. Therefore, by the time we have scored frames for a new object class, we could have already sampled a lot of the individual instances. Part of the reason this is possible is because even the scoring model scores every frame; whereas, \sys samples individual frames and avoids sampling frames likely to return existing results. If individual instances last longer, then \sys will naturally get a large number of instances without needing to look at too many frames. \random\ sampling also has this property, but \autoref{fig:ratioplot} shows \sys will out-compete random sampling as well.


\begin{figure}[htb]
\includegraphics[width=3.3in, trim={.2cm .5cm .25cm .3cm},clip]{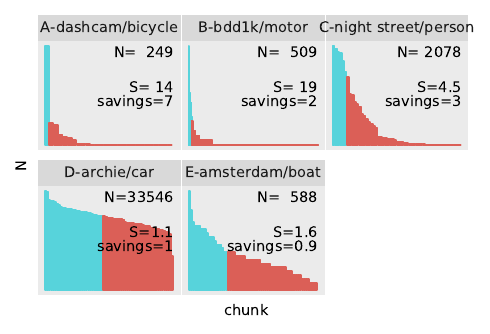}
\caption{Instance skew and savings for a few representative queries from \autoref{fig:ratioplot}.
Each vertical bar on these plots corresponds to a chunk, and its height is proportional to number of instances. Blue colored bars are the minimum set of chunks that cover half the instances. $S$ is our skew metric defined in \autoref{sec:simulation}}
\label{fig:skewplot}
\end{figure}

\if{0}
\subsubsection{Dashcam Dataset Results}
Overall, \sys\ saves more than $3\times$ in frames processed vs. \random, averaged across all classes and different recall levels. Savings do vary by class, reaching savings of above 5x for the bicycle class in the dashcam dataset. In terms of frames we can see \blazeit\ performs really well at recall of 10\% in the dashcam dataset, reaching $10\times$ the performance of \random. The reason is that in the dashcam dataset  \blazeit\ proxy models learn to classify frames much better than random, as is evident in \autoref{fig:average_precision}. For high recall levels, however, this initial advantage diminishes because it is difficult to avoid resampling the same instances in different frames, despite skipping a gap of 100 frames when a result is successfully found.

The story becomes more complex when comparing runtime. On the right panel of \autoref{fig:dashcam} we show the total runtime costs, including overhead incurred prior to processing the query, which erase the early wins from better prediction on the left panel. The impact of this overhead reduces in magnitude as more samples are taken, and so the difference is less dramatic at higher recall levels.
For \blazeit\, the total time in the plot is a sum of three separate costs: training, scoring, and sampling. Training costs are mostly due to needing to label a fraction of the dataset using the the expensive detector. Scoring costs come from needing to scan and process the full dataset with the proxy model. Finally, sampling costs come from visiting frames in score order.

In \autoref{tab:xput} we quantify the costs of the three steps, where for BlazeIt we estimate sampling costs by assuming a well-trained proxy model finding instances $100\times$ faster than random. The table shows that BlazeIt's runtime is dominated first by scoring (i.e., the full scan to process every frame through the proxy model), and second by labeling. While it may be possible to amortize the costs of labeling in BlazeIt across many queries, because the expensive model may provide general-purpose labels, the full scan that dominates the overhead cannot be amortized as the proxy is query-specific.
Because \random\ does not need to scan the entire dataset before yielding results, it can afford to be much lower in precision while still providing a faster runtime, especially at low recalls. ExSample outperforms the other methods because it provides high sampling precision while avoiding the need for a full scan.

\begin{figure}[htb]
\includegraphics[width=3.2in, trim={.cm .cm .cm .cm},clip]
{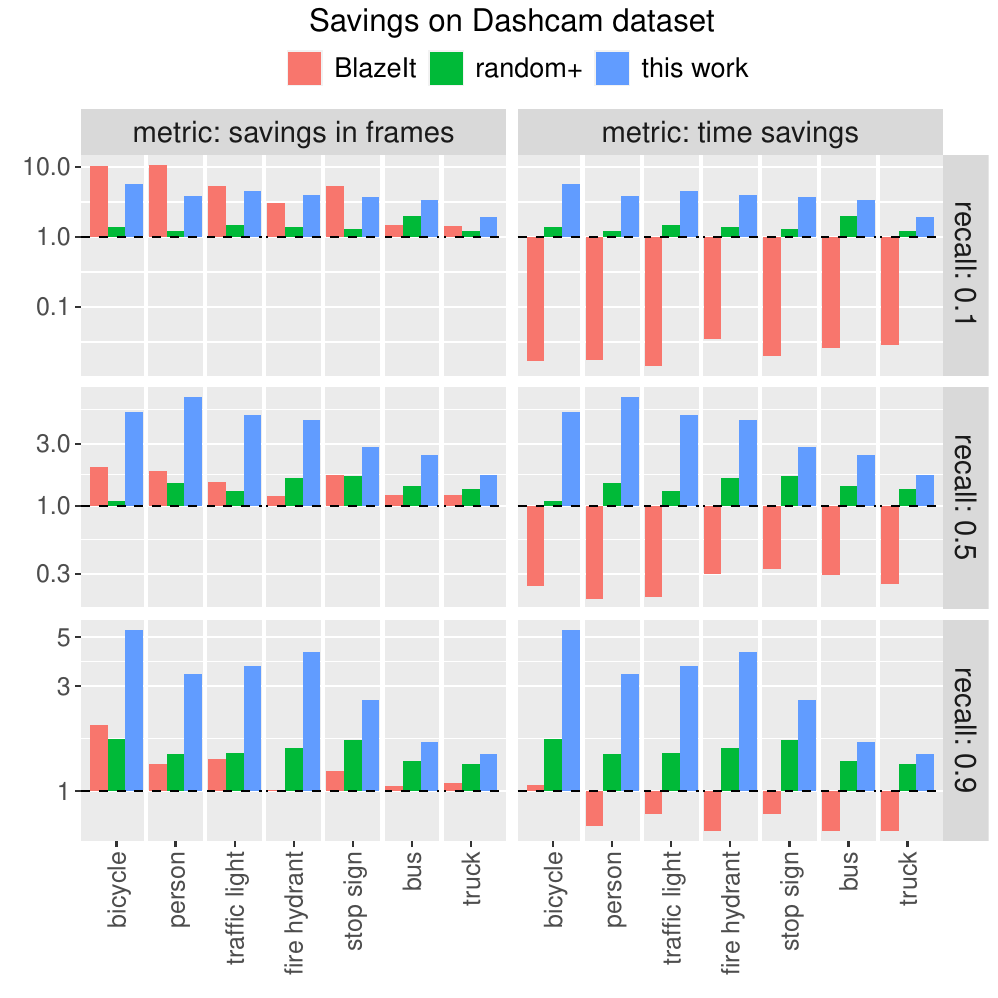}

\caption{Results on the Dashcam dataset. Comparing frames (left column) and times(right column) for different recall levels (each row) on the dashcam dataset. Different methods (color) are compared relative to the costs of using \random\ to process the same query and reach the same level of instance recall. 
}
\label{fig:dashcam}
\end{figure}

\begin{table}[htb]
\vspace{-.2in}
\centering
\small
\begin{tabular}{lrrrrr}
\toprule
{} &  through- & \blazeit &   random  \\
{stage} &  put(fps) &  frames/time(s) &  frames/time(s) \\
\midrule
label      &               20 &         110K/5.8Ks &        0  \\
score &               100 &         1.1M/12Ks &             0 \\
sample   &                20 &        120/5.8s &            1.2K/58s \\
\bottomrule
\end{tabular}
\caption{\hl{Time breakdown: random vs \blazeit{}.}}
\label{tab:xput}
\vspace{-.2in}
\end{table}

\subsubsection{BDD Results}
The maximum benefit from \sys\ is lower in \autoref{fig:bdd}, reaching only 3x. BDD is a challenging scenario for \sys because as we have mentioned earlier, it is composed of 1000 40-second clips.  BDD is also challenging for \blazeit, because proxy models have a harder time learning with high accuracy on this dataset, which presents many points of view and small objects. \autoref{fig:average_precision} shows that proxy models do not reach particularly high Average Precision. This is not an issue for \sys because we do not rely on a cheaper approximation, which may not be as easy to get for highly variable datasets.

\begin{figure}[tbh]
\includegraphics[width=3.2in, trim={.cm .cm .cm .cm},clip]
{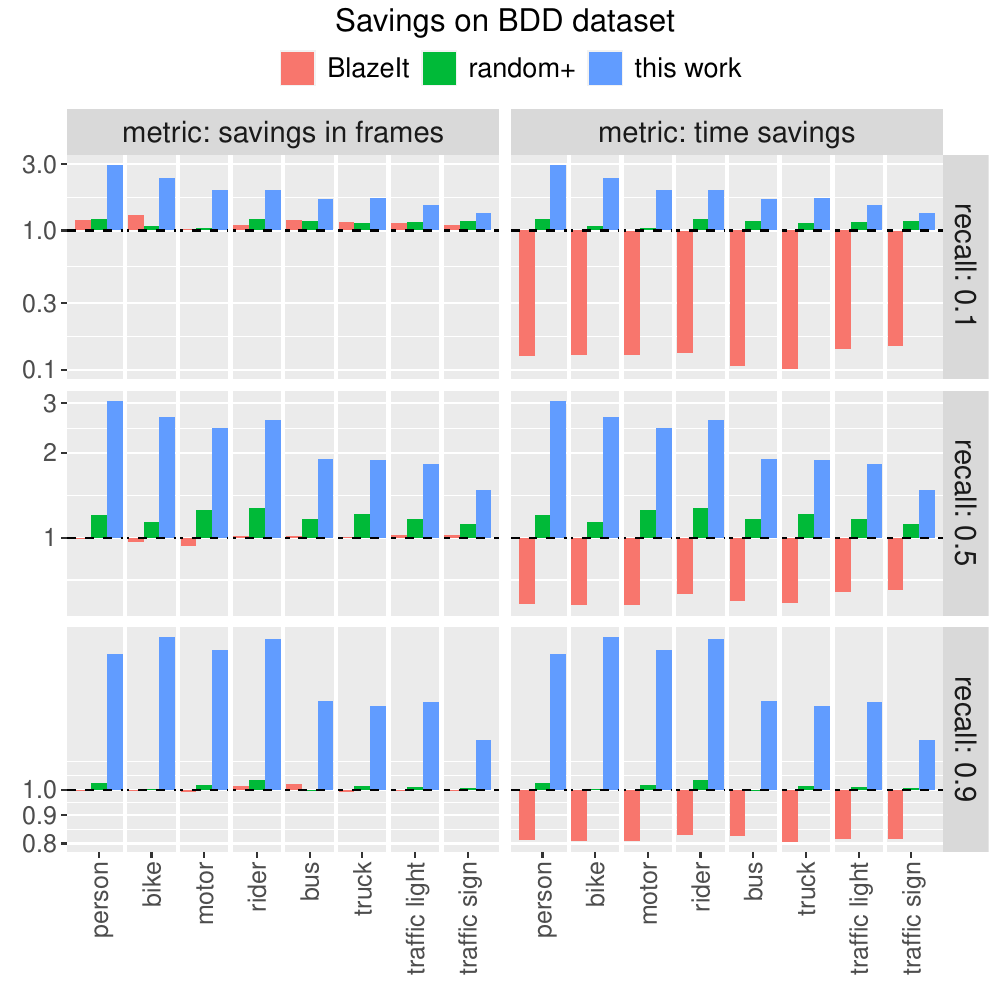}
\caption{Results on BDD dataset. Proxy has a harder time learning on this dataset, with lower accuracy scores. Counter-intuitively, this makes its savings comparable to those of \random\ sampling, improving its results compared to Dashcam datasest }
\label{fig:bdd}
\vspace{-.2in}
\end{figure}

\begin{figure}[htb]
\includegraphics[width=3.2in, trim={.cm .cm .cm .cm},clip]
{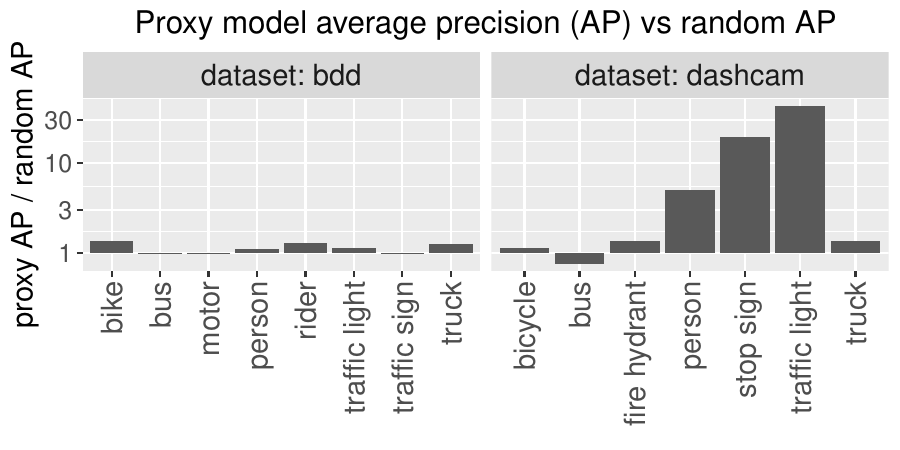}

\caption{Average Precision of proxy models vs. a randomly assigned score. For BDD, the proxy model does only slightly better than random in precision, which causes it to perform similarly to random when measured in savings in  \autoref{fig:bdd}. For the Dashcam dataset, the proxy models are more accurate than random.  Counter-intuitively, higher accuracy in scores hurts relative performance in  \autoref{fig:dashcam}  due to the greediness of the algorithm. }
\label{fig:average_precision}
\end{figure}

BlazeIt\cite{blazeit} prioritizes sampling the highest scoring frames, where the score is computed with a cheaper proxy model. Answering queries with such systems involves four stages, whose throughputs on our data are shown in \autoref{fig:xput}.

	\begin{enumerate} 
	\item {\em labelling phase}: requires labelling a fraction of the dataset with the expensive object detector. Its runtime grows linearly with training set size, and because the object detector is involved, the throughput is as low as that of the sampling phase.
	\item {\em training phase}: once the labels are generated, a cheaper proxy model is fit to the dataset. This phase can be relatively cheap if the proxy is itself cheap and the training set fits in memory, avoiding any need for data I/O or decoding. \autoref{fig:xput} shows the throughput of ResNet-18 is indeed much higher and unlikely to be the bottleneck.
	\item {\em scoring phase}: the proxy model runs over the dataset, producing a score for each frame. Even if the proxy model is virtually free, \autoref{fig:xput} shows that the I/O and decode for the remainder of the dataset dominate the runtime.
	\item {\em sampling phase}: we fetch and process frames in descending order of proxy score. This phase ends when we have found enough results for the user. This is the only phase for \sys\ and for baselines such as \lsb. Regardless of access pattern, this phase is dominated by the cost of inference with the object detector.  
	\end{enumerate}
	
The first three phases can be seen as a fixed cost paid prior to finding results for proxy-based methods. The promise of these proxy-based techniques is that by paying the upfront cost we can greatly save on the rest of the processing.  But our results in the previous section show that \sys is often more effective than the proxy, without these high up-front costs.  Furthermore, these up-front costs have to be paid multiple times;  for example, if the user wants to look for a new class of object or process a new data set. It is unlikely that users will want to look for the same class of object on the same frames of video again, which pre-training a proxy makes more efficient.

\subsubsection{Additional Datasets}
\label{sec:newdatasets}
In addition to the Dashcam and BDD datasets, we evaluated \sys on archie, amsterdam, and night-street from the BlazeIt paper~\cite{blazeit}, and measured how \sys performs compared to random in terms of frame savings on all five datasets. This evaluation includes around 35 classes total across the datasets.
The results are aggregated below by dataset in \autoref{tab:bydataset} (averaging across recall levels) and by recall level in \autoref{tab:byrecall} (averaging across datasets).

\autoref{tab:bydataset} shows that savings are both query and data dependent. This is reflected in \autoref{fig:skewplot}, where we show an object for each dataset, and how its distribution is skewed across chunks in the dataset. The dataset/class combos with the largest skews (dashcam, bdd1k and night-street) also show the largest gains, as expected based on Section \ref{sec:simulation}. Amsterdam and archie have little skew overall on any class, hence \sys cannot perform much better than \random. On BDD, there are many short chunks due to the dataset structure, so the savings are lower than skew would suggest, as explained in Section \ref{sec:chunks}.

\begin{table}[htb]
\vspace{-.1in}
\centering
\small
\begin{tabular}{lcrrr}
\toprule
dataset       &  classes   &  min (savings) &  median &  max \\
\midrule
dashcam & 7 &          1.7 &             4.1 &          6.5 \\
bdd1k & 8 &          1.5 &             1.9 &          3.5 \\
night-street & 6 &          1.2 &             1.6 &          3.1 \\
amsterdam & 7 &          0.9 &             1.5 &          1.7 \\
archie & 6 &          1.2 &             1.4 &          1.5 \\
\bottomrule
\end{tabular}
\caption{\hl{Aggregate savings of \sys relative to random for the different datasets.}}
\label{tab:bydataset}
\vspace{-.1in}
\end{table}

Rather than running a specific proxy-based system like BlazeIt on these additional datasets, we compared \sys to an idealized proxy, which has a 100\% recall on identifying instances, so its cost is just the cost it takes to score all frames in the dataset.  We show how the fraction of instances that \sys can identify in the time it takes this idealized proxy do this scoring, using proxy timing data based on the experiments in \autoref{sec:expresults}, using the same five classes show in \autoref{fig:skewplot}.

The results are shown in \autoref{tab:vsproxy}.  From this, we can see that across a range of object classes and skew, \sys can detect most instances in a dataset before a proxy can score all frames.

\begin{table}[]
\vspace{-.1in}
    \centering
    \small
\begin{tabular}{llrr}
\toprule
            &      &  recall &  fraction \\
dataset & category & (this work)         & sampled                \\
\midrule
A-dashcam & bicycle &    0.72 &           0.07 \\
B-bdd1k & motor &    0.38 &           0.22 \\
C-night street & person &    0.90 &           0.02 \\
D-archie & car &    0.90 &           0.02 \\
E-amsterdam & boat &    0.90 &           0.02 \\
\bottomrule
\end{tabular}
    \caption{Recall level reached by \sys in the same amount of time it would take to scan the dataset. The experiment stops at recall .9.}
    \label{tab:vsproxy}
    \vspace{-.2in}
\end{table}

Finally, \autoref{tab:byrecall} compares \sys to random on all five datasets, 
aggregating by recall. Here \sys\ performs slightly better at 50\% recall, because at 10\% recall, with fewer samples, \sys\ has less time to identify chunks with more instances, and at 90\% recall, some of the instances are in less skewed chunks that \sys samples less.

\begin{table}[htb]
\vspace{-.1in}
\centering
\small
\begin{tabular}{lrrr}
\toprule
recall &  min (savings) &  median &  max \\
\midrule
10\%    &          0.8 &             1.4 &          6.9 \\
50\%    &          0.9 &             1.7 &          6.9 \\
90\%&          1.2 &             1.9 &          4.8 \\
\bottomrule
\end{tabular}
\caption{\hl{Aggregate savings broken down by recall.}}	
\label{tab:byrecall}
\vspace{-.2in}
\end{table}
\fi

\section{Related Work}
\label{sec:relwork}
\noindent
\textbf{Video Data Management.}
There has recently been renewed interest in developing data management systems for video analytics due to the high cost of applying state-of-the-art video understanding methods, which almost always involve deep-neural networks, on large datasets. Recent systems adapt prior work in visual data management~\cite{qbic,chabot} for modern tasks that involve applying machine learning techniques to extract insights from large video datasets potentially comprising of millions of hours of video. In particular, DeepLens~\cite{deeplens} and VisualWorldDB~\cite{visualworlddb} explore a wide range of opportunities that an integrated data management system for video data would enable. DeepLens~\cite{deeplens} proposes an architecture split into storage, data flow, and query processing layers, and considers tradeoffs in each layer -- for example, the system must choose from several storage formats, including storing each frame as a separate record, storing video in an encoded format, and utilizing a hybrid segmented file. VisualWorldDB proposes storing and exposing video data from several perspectives as a single multidimensional visual object, achieving not only better compression but also faster  execution.


\textbf{Speeding up Video Analytics.}
Several optimizations have been proposed to speed up execution of various components of video analytics systems to address the cost of mining video data. Many recent approaches train specialized proxy classification models on reduced dimension inputs derived from the video~\cite{noscope,msr,videomon,focus,miris}. As we detailed in Section \ref{sec:objquery}, most related to our work is BlazeIt~\cite{blazeit}, which adapts proxy-based video query optimization techniques for object search limit queries.

Video analytics methods employing proxy models are largely orthogonal to our work: We can apply ExSample to sample frames, while still leveraging a cascaded classifier so that expensive models are only applied on frames where fast, proxy models have sufficient confidence. The extensions to limit queries proposed to BlazeIt are not orthogonal, as they involve controlling the sampling method, but a fusion sampling approach that combines ExSample with BlazeIt may be possible. Nevertheless, a major limitation of BlazeIt is that it requires applying the proxy model on every video frame, even for limit queries; as we showed in Section \ref{sec:eval}, this presents a major bottleneck that can make BlazeIt slower on limit queries than random sampling.

Besides methods employing proxy models, several approaches, such as Chameleon~\cite{chameleon} as well as~\cite{videoedge,approxnet,vstore}, consider tuning optimization knobs such as the neural network model architecture, input resolution, and sampling framerate to achieve the optimal speed-accuracy tradeoff. Like proxy models, these approaches are orthogonal to ExSample and can be applied in conjunction.

An alternative approach altogether is to create an index of the dataset ahead of time. For queries on video, this could be done using a multi-use proxy embedding model. \cite{tasti} trains a convolutional model to be reusable for multiple tasks. Part of the motivation is to avoid the repeated per-query scan overhead that makes some of the approaches based on proxy models be very expensive; however, it requires some ahead of time knowledge of the queries that will be run.  \sys targets the problem of working with a dataset that has not been indexed ahead of time on ad-hoc queries, usable as long as there is a black-box detector to specify what is sought within the dataset.

\textbf{Speeding up Object Detection.} Accelerating video analytics by improving model inference speed has been extensively studied in the computer vision community. Because these methods optimize speed outside of the context of a specific query, they are orthogonal to our approach and can be straightforwardly incorporated. Several general-purpose techniques improve neural network inference speed by pruning unimportant connections~\cite{deepcompression,shrinknets} or by introducing models that achieve high accuracy with fewer parameters~\cite{mobilenets,squeezenet}.
Coarse-to-fine object detection techniques first detect objects at a lower resolution, and only analyze particular regions of a frame at higher resolutions if there is a large expected accuracy gain~\cite{gao2018dynamic,autofocus}. Cross-frame feature propagation techniques accelerate object tracking by applying an expensive detection model only on sparse key frames, and propagating its features to intermediate frames using a lightweight optical flow network~\cite{zhu2018,dff}.

\begin{highlightenv}

\section{Future Work}
\label{sec:futurework}
Two  areas for future work are automating chunking and integrating some benefits from scoring while avoiding scans of the dataset. 

For scoring. We note the equations in \autoref{sec:sys} remain valid even if sampling within a chunk is non-uniform but based on a score. The current downside of scoring frames is the scanning component; therefore, a key to integrating these approaches would be a form of predictive scoring of frames that avoids scanning.
\end{highlightenv}

\section{Conclusion}

In this paper we introduced \sys, a method for processing distinct object-search queries on video repositories through chunk-based adaptive sampling. The aim of the approach is to find frames of video that contain objects of interest, without running an expensive object-detection algorithm on every frame. \sys approaches the sampling problem by partitioning the data into chunks and adjusting the frequency of samples for each chunks based on the rate at which {\em new} objects are sampled from each chunk.  We formulate this sampling process as an instance of Thompson sampling,  and explicitly estimate the probability of finding a new object in a future random frame for each chunk.  As results in a particular chunk are found, \sys allocates more samples to that chunk; and as new results are exhausted, \sys naturally refocuses its sampling on other less frequently sampled chunks. Our evaluation of \sys on real world datasets shows it provides consistent savings without requiring an expensive scan of the data.

\if{0}
\section*{Acknowledgements}

Research was sponsored by the United States Air Force Research Laboratory and the United States Air Force Artificial Intelligence Accelerator and was accomplished under Cooperative Agreement Number FA8750-19-2-1000. The views and conclusions contained in this document are those of the authors and should not be interpreted as representing the official policies, either expressed or implied, of the United States Air Force or the U.S. Government. The U.S. Government is authorized to reproduce and distribute reprints for Government purposes notwithstanding any copyright notation herein.
\fi




\if{0}
\section{optimality}
\TODO{ignore this section}
When using a fixed weighting $\vec{w}$ for each partition ahead of time,
the expected number of seen instances after $n$ samples are drawn randomly is $R(n,\vec{w}) = N \sum_i^N (1 - \vec{w}\vec{p_i})^n$.  Conceptually, the optimal $\vec{w^*}$ chosen ahead of time for the given data and partitioning can be seen as a minimization problem over the number of missing instances $M(n,\vec{w}) = \sum (1-\vec{wp_i})^n$. $\vec{p_i}$ is the vector for instance $i$ in our matrix $\mat{P}$, seen as a column. 

Note that in practice we do not know $\mat{P}$, which would be equivalent to knowing all the $p_{ij}$, so this is not a practical method for finding the solution $w^*(n)$, but the following calculations help elucidate the limits of \sys effectiveness as a function of the data and the partitioning, captured in matrix $\mat{P}$, and it also helps us show that \sys will in fact converge to $w^*(n)$ in expectation.

For convenience, we will work with the loss function $L(n,\vec{w}) = \sum \exp{(-n \vec{w} \vec{p_i})}$, which uses the approximation $(1-x)\lesssim e^{-x}$. Note $M(n, \vec{w}) \lesssim L(n,\vec{w})$, even in the case where the approximation is not tight, bounding this approximate loss from above also bounds the real one. We will use $\ell_i = \exp{(-n \vec{w} \vec{p_i})}$, so that $L(\vec{w},\vec{p},n) = \sum \ell_i$. 

	\begin{align}
	\nabla_{\vec{w}} \ell = -n\ell \vec{p} \\
	\nabla^2_{\vec{w}} \ell = n^2\ell \vec{p}\vec{p^{t}}
	\end{align}
	
	Therefore,
	\begin{align}
	\nabla_{\vec{w}} L = \sum_i -n \ell_i \vec{p_i} \label{eq:grad} \\
	\nabla^2_{\vec{w}} L  = \sum_i n^2 \ell_i \vec{p_i}\vec{p_i^t} \label{eq:hessian}
	\end{align}
	
	We can express $L= \vec{1}\cdot \exp{(-n\mat{P}\vec{w})}$, 
	the gradient $\nabla_{\vec{w}} L = -n \mat{P^t} \exp{(-n\mat{P}\vec{w})}$ and $\nabla^2_{\vec{w}} L  = n^2 \mat{P^t}\text{diag}(\exp{(-n\mat{P}\vec{w})})\mat{P}$. \\
	
	We may want to work with the log loss $\log {L}$ instead of $L$ directly.
	Let $\vec{s} = \text{softmax}(-n\mat{P}\vec{w})$, a vector with an entry for each instances ($N$) entries.
	In that case $\nabla_{\vec{w}} \log L = -n\cdot\mat{P^t} \vec{s}$ and $\nabla^2_{\vec{w}} \log L = n^2 \mat{P^t} (\text{diag} (\vec{s}) - \vec{s^t}\cdot\vec{s})\mat{P}$

\begin{align}
	\vec{w^*}(n) = \arg \min_\vec{w} L(n,\vec{w})	
\end{align}

	Subject to $\vec{w}$ having non negative entries that sum to 1.
	Note that $n\vec{w}*{p_i}$ is linear on each of its inputs, and $\exp(-x)$ is convex on $x$ (ie, looks like a smile), so that the overall loss $L$ is a sum of convex $\ell$, and hence also convex, and such a $\vec{w}$ is guaranteed to exist, though it may not be unique.
	
	We are interested in two things: how does $L(n, \vec{w^*}(n))$ differ from $L(\vec{1}, n)$, where $\vec{1}$ is an equal weighting of all chunks. 
\fi


\bibliographystyle{IEEEtran}
\bibliography{IEEEabrv,refs}

\begin{thebibliography}{10}
\providecommand{\url}[1]{#1}
\csname url@samestyle\endcsname
\providecommand{\newblock}{\relax}
\providecommand{\bibinfo}[2]{#2}
\providecommand{\BIBentrySTDinterwordspacing}{\spaceskip=0pt\relax}
\providecommand{\BIBentryALTinterwordstretchfactor}{4}
\providecommand{\BIBentryALTinterwordspacing}{\spaceskip=\fontdimen2\font plus
\BIBentryALTinterwordstretchfactor\fontdimen3\font minus
  \fontdimen4\font\relax}
\providecommand{\BIBforeignlanguage}[2]{{%
\expandafter\ifx\csname l@#1\endcsname\relax
\typeout{** WARNING: IEEEtran.bst: No hyphenation pattern has been}%
\typeout{** loaded for the language `#1'. Using the pattern for}%
\typeout{** the default language instead.}%
\else
\language=\csname l@#1\endcsname
\fi
#2}}
\providecommand{\BIBdecl}{\relax}
\BIBdecl

\bibitem{bdd}
\BIBentryALTinterwordspacing
F.~Yu, W.~Xian, Y.~Chen, F.~Liu, M.~Liao, V.~Madhavan, and T.~Darrell,
  ``{BDD100K:} {A} diverse driving video database with scalable annotation
  tooling,'' \emph{CoRR}, vol. abs/1805.04687, 2018. [Online]. Available:
  \url{http://arxiv.org/abs/1805.04687}
\BIBentrySTDinterwordspacing

\bibitem{osmblog}
T.~Murray, ``Help improve imagery in your area with our new camera lending
  program,'' \url{
  https://www.openstreetmap.us/2018/05/camera-lending-program/}, 2018.

\bibitem{nexar}
Nexar, ``Nexar,'' \url{https://www.getnexar.com/company}, 2018.

\bibitem{aws}
AWS, ``Amazon {EC2} on-demand pricing,''
  \url{https://aws.amazon.com/ec2/pricing/on-demand}, accessed: 2020-09-10.

\bibitem{miris}
F.~Bastani, S.~He, A.~Balasingam, K.~Gopalakrishnan, M.~Alizadeh,
  H.~Balakrishnan, M.~Cafarella, T.~Kraska, and S.~Madden, ``Miris: Fast object
  track queries in video,'' in \emph{Proceedings of the 2020 ACM SIGMOD
  International Conference on Management of Data}, 2020, pp. 1907--1921.

\bibitem{focus}
K.~Hsieh, G.~Ananthanarayanan, P.~Bodik, S.~Venkataraman, P.~Bahl,
  M.~Philipose, P.~B. Gibbons, and O.~Mutlu, ``Focus: Querying large video
  datasets with low latency and low cost,'' in \emph{13th $\{$USENIX$\}$
  Symposium on Operating Systems Design and Implementation ($\{$OSDI$\}$ 18)},
  2018, pp. 269--286.

\bibitem{noscope}
\BIBentryALTinterwordspacing
D.~Kang, J.~Emmons, F.~Abuzaid, P.~Bailis, and M.~Zaharia, ``{NoScope}:
  Optimizing neural network queries over video at scale,'' \emph{Proc. VLDB
  Endow.}, vol.~10, no.~11, pp. 1586--1597, Aug. 2017. [Online]. Available:
  \url{https://doi.org/10.14778/3137628.3137664}
\BIBentrySTDinterwordspacing

\bibitem{videomon}
N.~Koudas, R.~Li, and I.~Xarchakos, ``Video monitoring queries,'' in \emph{2020
  IEEE 36th International Conference on Data Engineering (ICDE)}.\hskip 1em
  plus 0.5em minus 0.4em\relax IEEE, 2020, pp. 1285--1296.

\bibitem{msr}
\BIBentryALTinterwordspacing
Y.~Lu, A.~Chowdhery, S.~Kandula, and S.~Chaudhuri, ``Accelerating machine
  learning inference with probabilistic predicates.''\hskip 1em plus 0.5em
  minus 0.4em\relax ACM SIGMOD, June 2018. [Online]. Available:
  \url{https://www.microsoft.com/en-us/research/publication/accelerating-machine-learning-queries-with-probabilistic-predicates/}
\BIBentrySTDinterwordspacing

\bibitem{blazeit}
\BIBentryALTinterwordspacing
D.~Kang, P.~Bailis, and M.~Zaharia, ``Blazeit: Fast exploratory video queries
  using neural networks,'' \emph{CoRR}, vol. abs/1805.01046, 2018. [Online].
  Available: \url{http://arxiv.org/abs/1805.01046}
\BIBentrySTDinterwordspacing

\bibitem{yolov3}
\BIBentryALTinterwordspacing
J.~Redmon and A.~Farhadi, ``{YOLOv3}: An incremental improvement,''
  \emph{CoRR}, vol. abs/1804.02767, 2018. [Online]. Available:
  \url{http://arxiv.org/abs/1804.02767}
\BIBentrySTDinterwordspacing

\bibitem{maskrcnn}
\BIBentryALTinterwordspacing
K.~He, G.~Gkioxari, P.~Doll{\'{a}}r, and R.~B. Girshick, ``Mask {R-CNN},''
  \emph{CoRR}, vol. abs/1703.06870, 2017. [Online]. Available:
  \url{http://arxiv.org/abs/1703.06870}
\BIBentrySTDinterwordspacing

\bibitem{coco}
\BIBentryALTinterwordspacing
T.~Lin, M.~Maire, S.~J. Belongie, L.~D. Bourdev, R.~B. Girshick, J.~Hays,
  P.~Perona, D.~Ramanan, P.~Doll{\'{a}}r, and C.~L. Zitnick, ``Microsoft
  {COCO:} common objects in context,'' \emph{CoRR}, vol. abs/1405.0312, 2014.
  [Online]. Available: \url{http://arxiv.org/abs/1405.0312}
\BIBentrySTDinterwordspacing

\bibitem{tfmodels}
\BIBentryALTinterwordspacing
J.~Huang, V.~Rathod, C.~Sun, M.~Zhu, A.~Korattikara, A.~Fathi, I.~Fischer,
  Z.~Wojna, Y.~Song, S.~Guadarrama, and K.~Murphy, ``Speed/accuracy trade-offs
  for modern convolutional object detectors,'' \emph{CoRR}, vol.
  abs/1611.10012, 2016. [Online]. Available:
  \url{http://arxiv.org/abs/1611.10012}
\BIBentrySTDinterwordspacing

\bibitem{SORT}
\BIBentryALTinterwordspacing
A.~Bewley, Z.~Ge, L.~Ott, F.~Ramos, and B.~Upcroft, ``Simple online and
  realtime tracking,'' \emph{CoRR}, vol. abs/1602.00763, 2016. [Online].
  Available: \url{http://arxiv.org/abs/1602.00763}
\BIBentrySTDinterwordspacing

\bibitem{techreport}
O.~Moll, F.~Bastani, S.~Madden, M.~Stonebraker, V.~Gadepally, and T.~Kraska,
  ``{ExSample}: Efficient searches on video repositories through adaptive
  sampling (technical report),''
  \url{https://github.com/orm011/tmp/blob/master/techreport.pdf}, September
  2020.

\bibitem{thompson_sampling}
\BIBentryALTinterwordspacing
D.~Russo, B.~V. Roy, A.~Kazerouni, and I.~Osband, ``A tutorial on thompson
  sampling,'' \emph{CoRR}, vol. abs/1707.02038, 2017. [Online]. Available:
  \url{http://arxiv.org/abs/1707.02038}
\BIBentrySTDinterwordspacing

\bibitem{bayesucb}
E.~Kaufmann, ``\BIBforeignlanguage{en}{On bayesian index policies for
  sequential resource allocation},'' \emph{\BIBforeignlanguage{en}{Ann.
  Stat.}}, vol.~46, no.~2, pp. 842--865, Apr. 2018.

\bibitem{cvxpy}
S.~Diamond and S.~Boyd, ``{CVXPY}: {A} {P}ython-embedded modeling language for
  convex optimization,'' \emph{Journal of Machine Learning Research}, vol.~17,
  no.~83, pp. 1--5, 2016.

\bibitem{scanner}
\BIBentryALTinterwordspacing
A.~Poms, W.~Crichton, P.~Hanrahan, and K.~Fatahalian, ``Scanner: Efficient
  video analysis at scale,'' \emph{ACM Trans. Graph.}, vol.~37, no.~4, pp.
  138:1--138:13, Jul. 2018. [Online]. Available:
  \url{http://doi.acm.org/10.1145/3197517.3201394}
\BIBentrySTDinterwordspacing

\bibitem{chameleon}
J.~Jiang, G.~Ananthanarayanan, P.~Bodik, S.~Sen, and I.~Stoica, ``Chameleon:
  Scalable adaptation of video analytics,'' in \emph{Proceedings of the 2018
  Conference of the ACM Special Interest Group on Data Communication}, 2018,
  pp. 253--266.

\bibitem{qbic}
M.~Flickner, H.~Sawhney, W.~Niblack, J.~Ashley, Q.~Huang, B.~Dom, M.~Gorkani,
  J.~Hafner, D.~Lee, D.~Petkovic \emph{et~al.}, ``Query by image and video
  content: The {QBIC} system,'' \emph{Computer}, vol.~28, no.~9, pp. 23--32,
  1995.

\bibitem{chabot}
V.~E. Ogle and M.~Stonebraker, ``Chabot: Retrieval from a relational database
  of images,'' \emph{Computer}, vol.~28, no.~9, pp. 40--48, 1995.

\bibitem{deeplens}
S.~Krishnan, A.~Dziedzic, and A.~J. Elmore, ``Deeplens: Towards a visual data
  management system,'' in \emph{Conference on Innovative Data Systems Research
  (CIDR)}, 2019.

\bibitem{visualworlddb}
B.~Haynes, M.~Daum, A.~Mazumdar, M.~Balazinska, A.~Cheung, and L.~Ceze,
  ``{VisualWorldDB}: A dbms for the visual world.'' in \emph{CIDR}, 2020.

\bibitem{videoedge}
C.-C. Hung, G.~Ananthanarayanan, P.~Bodik, L.~Golubchik, M.~Yu, P.~Bahl, and
  M.~Philipose, ``{VideoEdge}: Processing camera streams using hierarchical
  clusters,'' in \emph{2018 IEEE/ACM Symposium on Edge Computing (SEC)}.\hskip
  1em plus 0.5em minus 0.4em\relax IEEE, 2018, pp. 115--131.

\bibitem{approxnet}
R.~Xu, J.~Koo, R.~Kumar, P.~Bai, S.~Mitra, G.~Meghanath, and S.~Bagchi,
  ``{ApproxNet}: Content and contention aware video analytics system for the
  edge,'' \emph{arXiv preprint arXiv:1909.02068}, 2019.

\bibitem{vstore}
T.~Xu, L.~M. Botelho, and F.~X. Lin, ``{VStore}: A data store for analytics on
  large videos,'' in \emph{Proceedings of the Fourteenth EuroSys Conference},
  2019, pp. 1--17.

\bibitem{tasti}
D.~Kang, J.~Guibas, P.~Bailis, T.~Hashimoto, and M.~Zaharia, ``Task-agnostic
  indexes for deep learning-based queries over unstructured data,'' Sep. 2020.

\bibitem{deepcompression}
S.~Han, H.~Mao, and W.~J. Dally, ``Deep compression: Compressing deep neural
  networks with pruning, trained quantization and huffman coding,'' in
  \emph{International Conference on Learning Representations}, 2016.

\bibitem{shrinknets}
G.~Leclerc, R.~C. Fernandez, and S.~Madden, ``Learning network size while
  training with {ShrinkNets},'' in \emph{Conference on Systems and Machine
  Learning}, 2018.

\bibitem{mobilenets}
\BIBentryALTinterwordspacing
A.~G. Howard, M.~Zhu, B.~Chen, D.~Kalenichenko, W.~Wang, T.~Weyand,
  M.~Andreetto, and H.~Adam, ``{MobileNets}: Efficient convolutional neural
  networks for mobile vision applications,'' \emph{CoRR}, vol. abs/1704.04861,
  2017. [Online]. Available: \url{http://arxiv.org/abs/1704.04861}
\BIBentrySTDinterwordspacing

\bibitem{squeezenet}
F.~N. Iandola, S.~Han, M.~W. Moskewicz, K.~Ashraf, W.~J. Dally, and K.~Keutzer,
  ``{SqueezeNet}: {AlexNet}-level accuracy with 50x fewer parameters and 0.5 mb
  model size,'' \emph{arXiv preprint arXiv:1602.07360}, 2016.

\bibitem{gao2018dynamic}
M.~Gao, R.~Yu, A.~Li, V.~I. Morariu, and L.~S. Davis, ``Dynamic zoom-in network
  for fast object detection in large images,'' in \emph{Proceedings of the IEEE
  Conference on Computer Vision and Pattern Recognition}, 2018, pp. 6926--6935.

\bibitem{autofocus}
M.~Najibi, B.~Singh, and L.~S. Davis, ``Autofocus: Efficient multi-scale
  inference,'' in \emph{Proceedings of the IEEE International Conference on
  Computer Vision}, 2019, pp. 9745--9755.

\bibitem{zhu2018}
X.~Zhu, J.~Dai, L.~Yuan, and Y.~Wei, ``Towards high performance video object
  detection,'' in \emph{Proceedings of the IEEE Conference on Computer Vision
  and Pattern Recognition}, 2018, pp. 7210--7218.

\bibitem{dff}
X.~Zhu, Y.~Xiong, J.~Dai, L.~Yuan, and Y.~Wei, ``Deep feature flow for video
  recognition,'' in \emph{Proceedings of the IEEE Conference on Computer Vision
  and Pattern Recognition}, 2017, pp. 2349--2358.

\end{thebibliography}

\if{0}
\newpage

\section*{Response to Reviewer Comments}
We thank the reviewers for their constructive comments, which have helped us substantially improve our paper.  We have highlighted our major changes in \textcolor{blue}{blue}.

\subsection*{Metareview}
\label{sec:metareview}
\begin{enumerate}
\item {\em The experimental evaluation needs to be largely strengthened. The authors need to use significantly different datasets that are of much larger size for evaluation, and report additional tests per reviews.}.  We have added 3 more datasets from BlazeIt, used both in \blazeit\ and in other papers. These are substantially different in content, increasing the diversity of video in our evaluation: they are from fixed cameras rather than moving cameras (dashcams in the original datasets), they are larger in size (30 hours each), and they are made up of a single camera stream rather than independent clips. We test a wide range of queries on them, bringing the total number of queries to 35. Additionally, we added extensive simulation experiments (Section \ref{sec:analysis}) designed to highlight properties of the datasets that allow \sys to work well. 
\item {\em Substantial improvements on writing are needed.} We apologize, and have revised our writing to make it easier to read. \srm{Can we say anything concrete about what we have done.}
\item {\em Additional discussions on related work are needed.} We expanded our discussion of related work to include suggestions from the reviewers, including more adjacent work from both the systems community and the computer vision community in a greatly expanded related work section.
\item \label{itm:analysis} {\em More discussions and in-depth analysis of the proposed approach are needed, including when the system may fail and other issues pointed out in the reviews.} In addition to expanding our evaluation, we greatly expanded our discussion of ExSample's upsides, trade-offs, and limitations through a combination of simulations over synthetic datasets and mathematical arguments in a new section (Section \ref{sec:analysis}).
\end{enumerate}

\subsection*{Reviewer 1}
\subsubsection*{Weakpoints and Detailed Evaluation}
\begin{enumerate}
\item {\em The main shortcoming to me is the evaluation. It's done on very limited datasets, two of the exact same kind. Little can be inferred from this. They are also very small.}. We have added more datasets in Section \ref{sec:newdatasets}. We also explain why the  existing datasets are more different than they may appear, because of the effect of having very many small video files in BDD, sampled from a large diversity of different driving scenarios, vs. the dashcam dataset which is a larger amount of video from a single car. The effect of having many files is explored through simulation in a new section Section \ref{sec:chunks}. We have also added a new section on simulation based-analysis, which is in Section \ref{sec:simulation}, which helps to shed light on when \sys performs well versus random sampling and when it does not.
\item {\em Further, the presentation lacks a critical analysis of the approach. Where does the approach, i.e., under what circumstances, does the approach fail? Using entirely different datasets (or somehow synthetic datasets) could shed a light on this.} We have added experiments on synthetic datasets to better characterize this.  We have also provided theoretical upper bounds on how well it performs. See \autoref{itm:analysis} in Section \ref{sec:metareview}.
\item {\em Another issue which I did not understand well, is how the $p$'s are defined for different objects? Does the user ultimately need to provide them? How are they determined? How does speed influence $p$? Again, using pretty much the same kind of video for this is not helpful at all. What $p$'s were used for the experiments? Will they be significantly different for other objects?} The user does {\em not} need to provide $p$, and \sys does not assume prior knowledge of these values. We wholly agree with your opinion that the usefulness of \sys would be much reduced otherwise: type of object, speed of object or camera motion speed would invalidate assumptions. We have updated our writing to make this clear early on, and in Section \ref{sec:simulation} we vary the $p_i$ by several orders of magnitude, and show that \sys consistently performs well. Adaptability to diverse datasets is a key strength of \sys.

\item{\em Overall, I found the method quite simple. It's rather intuitive and, in terms of cleverness, marginally better than random sampling.}  
We agree that the \sys algorithm is simple; we believe that this is part of its appeal, and it is exciting to us that such a simple method can outperform competing methods published in VLDB/SIGMOD.  We believe our core contributions are 1) providing a strong mathematical foundation that shows why and when \sys will help, and why \sys will not make things worse (which is not true of competing methods that require an upfront scan of the full dataset, as our results show) and 2) not requiring heavy prior knowledge about the data (either in the form of user configuration, hard-coded assumptions in the model, or pre-trained specialized models),  nor strong assumptions about the nature of data such as whether the camera is fixed or if it moves, or if events last a fixed amount of time.

\end{enumerate}
\subsection*{Reviewer 2}
\subsubsection*{Weak points} 
\begin{enumerate}
\item {\em For distinct object query, the algorithm presented follows the routine of ``sample frame, detect object, matching check''. However, it is not clear whether there are some existing efforts in the computer vision field that are tailored for distinct object detection or dynamic object detection in video, instead of the static object detection used in the paper? For example, would motion detection or tracking technique more suitable for distinct object query so that we can avoid matching check?} We have added some discussion of object detection in video as part of our related work section. Existing efforts in the CV community for dynamic object detection in video emphasize improving accuracy of detection by leveraging use of multiple frames at the same time, or some kind of temporal context, for each prediction. These are not necessarily cheaper and they would be orthogonal to what we do. 
Stand-alone tracking is cheaper than detection, but requires a scan of the frames. In \autoref{tab:xput} we show scanning with a light weight per-frame algorithm is not necessarily cheap.  Hence, the rationale behind avoiding tracking is similar to the rationale behind avoiding scanning.
\item {\em It is not clear under which circumstances ExSample outperforms other algorithms.} 
\begin{enumerate}
\item {\em For instance, with large recall (in LIMIT), would sequential execution be better than ExSample, especially with dynamic object detection or motion detection in the video?}. The simulation section (Section \ref{sec:simulation}) shows more clearly \sys helps save frames at every level of recall, but may struggle most early on. This is also shown in real data in \autoref{tab:byrecall}. For extremely high recall,  if we have no constraints on minimum  duration of an event, then all methods would need to process most of all the frames. Whether this is done sequentially is no longer  important, because the dominant cost is the expensive classifier. 

\item {\em When there are objects that only appear for a very short period, any random sampling scheme is going to be hard to capture such objects. If the query asks for large recall (in LIMIT), sequential execution is preferred in this adversary case.}

Our simulation explores this case of objects with very short duration, and we better characterize the cases when \sys will and will not outperform random (or sequential).

\end{enumerate}

\item {\em How much is the novelty of Bias and variance calculation? What is the bias and variance for the original Good-turing estimator? Is it similar to the one in the paper?} 
The main source we looked at for the original GT estimator is \cite{gtbounds}. The main result in \cite{good-biometrika} as it relates to our use case is about the estimator having low bias. Neither of these papers cover the situation we face when sampling frames.  Our contribution is an estimator that works well with the use cases of \sys, and we extend prior work in 4 ways: (1) we verify we can adapt it to a scenario where multiple results can appear in one sample, (2) we relate the bias to statistics of the query and data ($\mu_p$ and $\sigma_p$) (3) we model the uncertainty as $n$ grows so we can use it in a bandit context, and (4) we verify we can cover the case where a single instance may appear in different chunks. We have expanded our presentation of this analysis to clarify this.
\item {\em In the experiments, I think it may not be fair to include the pre-processing time of BlazeIt – the pre-processing time can be amortized across different queries (not only limit-k distinct object query) and can even be transferred to a different dataset.} We presented two views of the results. One uses frames processed as a metric. This metric excludes all preprocessing, and is a proxy for time (all of ExSample and BlazeIt and \random\ will essentially sample frames from the dataset and run an expensive classifier). This is the view in the left panels of Figures \ref{fig:bdd} and \ref{fig:dashcam}. On the right panels we include pre-processing costs. The preprocessing cost is due to both training and scanning the whole dataset, but is dominated largely by scanning the dataset with a query-specific proxy model. We added \autoref{tab:xput} to make this clear. Because the scanning section is done per-query, it cannot be amortized across queries: for any new type of query, we need to score every frame after training a new specialized proxy model.  It's possible that the proxy model could be reused if the same class of object was being detected on new datasets, but it is unclear how well such proxys models will generalize if data is sufficiently different. 
\item {\em Why not using the same dataset as that in BlazeIt?} We have now added three of the four datasets from BlazeIt in Section \ref{sec:newdatasets}. They were not publicly available as of the first version of this paper. 
\item {\em Also, more baselines -- maybe the other approaches like Noscope and PP[12] have better performance than BlazeIt for distinct object queries?} We originally compared against NoScope, but found BlazeIt provides better performance. The main issue is that NoScope is inherently sequential, so it is much more likely than BlazeIt to keep applying the expensive detector on the same object. Additionally, BlazeIt can decide where to sample next with a global view of where the highest scores are, whereas NoScope has to decide ahead of time on a threshold and may miss objects entirely due to this. When we tried adapting NoScope to our use case it started to look more and more like BlazeIt, and consistently performed worse.
\end{enumerate}
\subsubsection*{Detailed evaluation}
\begin{enumerate}
\item {\em For Thompson sampling, a natural alternative is to formulate it as a multi-arm bandit problem and use upper confidence bound (UCB) as the strategy. Have you considered this alternative and how would this perform compared to Thompson sampling?} Thank you for the recommendation. We evaluated this alternative, specifically Bayes-UCB\cite{bayesucb}, using the same Gamma distribution model derived in Section \ref{sec:thompson}, and we observed almost identical trajectories with those of TS in the simulation in Section \ref{sec:simulation}. Part of the reason may be that Section \ref{sec:upperbound} shows an exact upper bound for any weighting scheme based on chunks, which thus applies to both  Thompson sampling and Bayes-UCB variants, as well as any other alternatives, and in Section \ref{sec:simulation} we observe \sys converges to the optimal allocation fairly quickly in many cases, so that the gains from any alternative bandit strategy are bound to be small especially at higher recall levels. 
\item {\em Is the video stored on Disk? how is the time spent on random access to different frames?} We find that the expensive object detector (20 fps) dominates runtime, and the I/O and decoding cost for random access of video frames is negligible. We measured this on a 1 GPU 1 SSD setup, with keyframes inserted every 10 frames. Although access times are not significant, being able to sample frames in batches and update chunk statistics asynchronously for different batches means \sys can in principle exploit parallel I/O, as well as overlap parallel I/O with parallel compute. 
\item {\em It would be good to have S3.3.2 not only on the synthetic dataset but also on the real dataset in S4.} To help bridge the gap between simulation and real data, we added Section \ref{sec:analysis} and \autoref{fig:simplot} as well as \autoref{fig:chunks}, on simulated datasets.  Although these datasets are simulated, they include significant skew that shows that our analysis in S3 in fact holds in practice.  We also showed that our real world data can be characterized by the parameters of these new simulations to estimate the performance of \sys on real datasets.  We hope those changes address the underlying goal of this request.
\item {\em Lots of typos}. We apologize for that and thank you for your patience when reading through our typos in the previous manuscript.  We have   proof-read more carefully this time.
\end{enumerate}
\subsubsection*{Revision}
\begin{enumerate}
\item {\em Discussion on any existing efforts in the computer vision field that are tailored for distinct object detection or dynamic object detection in video, instead of the static object detection used in the paper.} We have expanded our related work (Section \ref{sec:relwork}) to discuss several related works in this area.
\item {\em Discussion on when ExSample may fail.} We added Section \ref{sec:analysis} to address this, and also three new datasets in Section \ref{sec:eval}.
\item {\em In the experiment, present the time of BlazeIt without pre-training time; and use datasets in BlazeIt.} We have added 3 of 6 datasets in BlazeIt. The  remaining essentially have only one class of query (boat) and in those datasets there are boats everywhere all the time, because some components of BlazeIt target counting queries or queries such as `return frames where there are more than 7 boats' in addition to distinct object queries. The left sides of \autoref{fig:dashcam} and \autoref{fig:bdd} show the total frames processed, which is a proxy for time without any pre-training or scanning, and in \autoref{tab:xput} we break down pre-training from scanning.  As noted in our response above, even if a new proxy does not need to be trained, running BlazeIt on any video requires the proxy model to be run on every frame, which is a significant time cost.
\end{enumerate}
\subsection*{Reviewer 3}
\subsubsection*{Detailed Evaluation and Revision}
\begin{enumerate}
\item {\em The paper is not very well written, neither well explained. It’s difficult to follow the idea of the authors.} We apologize for the lack of clarity in some sections. We have used your comments to improve the writing, and believe you have helped us make the exposition clearer this time around.
\item {\em There are some format errors. Please proofread the manuscript carefully. For example, (i) In page 1, the citation item in the sentence ‘… as the rental price of a GPU is around \$3 per hour[?] is broken. (ii) In page 12, the content exceeds the page margin. (iii) Fig. 1 is blank.} We apologize for these and thank you for your patience with these issues while reviewing our manuscript. We have been more careful with proofreading this new version.
\item {\em This paper studies the problem of querying on large video. However, some important related works (e.g., [1,2]) in this area are missing.
[1] K. Hsieh et al., Focus: Querying large video datasets with low latency and low cost. OSDI18.
[2] J. Jiang et al., Chameleon: scalable adaptation of video analytics. SIGCOMM18.} We have added a discussion of these two papers in our related work section, as well as other related work in the data management, mobile systems, and computer vision communities. Thank you for bringing them to our attention.
\end{enumerate}
\fi

\end{document}